\newcommand{\cont}[2]{$\left\{ \begin{minipage}{.9cm} .. #1 \\ .. #2
\end{minipage}\right.$}
\newcommand{\contt}[2]{$\left\{ \begin{minipage}{1.2cm} .. #1 \\ .. #2
\end{minipage}\right.$}
\newcommand{\defn}[1]       {{\textit{\textbf{#1}}}}
\newfont{\subheadfnt}{ptmbi8t at 10pt}
\renewcommand{\paragraph}[1]{%
{\normalfont\normalsize\bfseries #1}%
}%
\newcommand{\thmref}[1]     {Theorem~\ref{thm:#1}}
\newcommand{\secref}[1]     {Section~\ref{sec:#1}}
\newcommand{\appref}[1]     {Appendix~\ref{app:#1}}
\newcommand{\figref}[1]     {Figure~\ref{fig:#1}}
\newcommand{\defref}[1]     {Definition~\ref{def:#1}}
\newcommand{\lemref}[1]     {Lemma~\ref{lem:#1}}
\newcommand{\floor}[1]      {\left\lfloor #1 \right\rfloor}
\newtheorem{thm}{Theorem}
\newtheorem{prop}[thm]{Proposition}
\newtheorem{definition}[thm]{Definition}
\newcommand{\proporef}[1]         {Proposition~\ref{propo:#1}}
\newcommand{\proporeftwo}[2]         {Propositions~\ref{propo:#1} and ~\ref{propo:#2}}
\numberwithin{lemma}{section}
\numberwithin{definition}{section}
\numberwithin{prop}{section}
\newcommand{\pagereq}[1]{\ensuremath{\sigma_{i, #1}}\xspace}
\newcommand{\pageij}{\pagereq{j}}
\newcommand{\costprop}{cost property\xspace}
\newcommand{\ccp}{bounded-shared-\costprop}
\newcommand{\Ccp}{Bounded-shared-\costprop}
\newcommand{\cgraph}{input-cost graph\xspace}
\newcommand{\mc}{multicore\xspace}
\newcommand{\MC}{Multicore\xspace}
\newcommand{\paging}{caching\xspace}
\newcommand{\mcp}{\mc \paging}
\newcommand{\MCp}{\MC \paging}
\newcommand{\ts}{timestep\xspace}
\newcommand{\tss}{timesteps\xspace}
\newcommand{\alg}{\ensuremath{\mathcal{A}}\xspace}
\newcommand{\opt}{\textsc{OPT}\xspace}
\newcommand{\FWF}{\textsc{FWF}\xspace}
\newcommand{\req}{\ensuremath{\mathcal{R}}\xspace}
\newcommand{\ma}{\ensuremath{\mathcal{A}}\xspace}
\newcommand{\mb}{\ensuremath{\mathcal{B}}\xspace}
\newcommand{\sked}{\ensuremath{\mathcal{S}}\xspace}
\newcommand{\skedar}{\ensuremath{\sked_{\req, \ma}}\xspace}
\newcommand{\skedbr}{\ensuremath{\sked_{\req, \mb}}\xspace}
\newcommand{\fif}{\textsc{FIF}\xspace}
\newcommand{\lru}{\textsc{LRU}\xspace}
\newcommand{\LRU}{\textsc{LRU}\xspace}
\newcommand{\FIFO}{\textsc{FIFO}\xspace}
\newcommand{\ith}{\ensuremath{i}th\xspace}
\newcommand{\core}{\ensuremath{P}\xspace}
\newcommand{\seqset}{\ensuremath{\mathcal{I}}\xspace}
\newcommand{\localset}{\ensuremath{\mathcal{I}^f}\xspace}
\newcommand{\seqsetnt}{\ensuremath{\mathcal{I}(n^t_1, \ldots, n^t_p)}\xspace}
\newcommand{\sleq}{\ensuremath{\preceq_s}\xspace}
\newcommand{\bequiv}{\ensuremath{\equiv_c}\xspace}
\newcommand{\bleq}{\ensuremath{\preceq_c}\xspace}
\newcommand{\bless}{\ensuremath{\prec_c}\xspace}
\newcommand{\bnleq}{\ensuremath{\npreceq_c}\xspace}
\newcommand{\surj}{natural\xspace}
\newcommand{\surjcap}{Natural\xspace}
\newcommand{\skedpre}{\ensuremath{\skedar^{\text{pre}}}\xspace}
\newcommand{\skedsuf}{\ensuremath{\skedar^{\text{suf}}}\xspace}
\newcommand{\inp}{input\xspace}
\newcommand{\inps}{inputs\xspace}
\newcommand{\skedprej}{\ensuremath{\sked_{j, \req, \ma}^{\text{pre}}}\xspace}
\newcommand{\skedsufj}{\ensuremath{\sked_{j, \req, \ma}^{\text{suf}}}\xspace}
\newcommand{\reqpre}{\ensuremath{\req^{\text{pre}}}\xspace}
\newcommand{\reqsuf}{\ensuremath{\req^{\text{suf}}}\xspace}
\newcommand{\pnlru}{\ensuremath{\sigma_{\text{NLRU}}}\xspace}
\newcommand{\plru}{\ensuremath{\sigma_{\text{LRU}}}\xspace}
\newcommand{\compsuf}{\ensuremath{\overline{\reqsuf}}\xspace}
\newcommand{\curr}{\ensuremath{\reqpre r_{j+1}}\xspace}
\newcommand{\skedloc}{\ensuremath{\sked_{\localset,\ma}}\xspace}
\newcommand{\nar}{\ensuremath{n_{\req, \ma}}\xspace}
\newcommand{\fullcompsuf}{\ensuremath{\reqpre r_{j+1} \compsuf}\xspace}
\newcommand{\skedmapari}{\ensuremath{\sked_{\pi(\req_i), \ma}}\xspace}
\newcommand{\skedbri}{\ensuremath{\sked_{\req_i, \mb}}\xspace}
\newcommand{\inv}{\ensuremath{\mathcal{V}}\xspace}
\newcommand{\invpage}{\ensuremath{\inv_{\sigma, \req, \ma, \mb}}\xspace}
\newcommand{\invpageset}{\ensuremath{\inv_{\sigma, \localset, \ma, \mb}}\xspace}
\newcommand{\invnlru}{\ensuremath{\inv_{\pnlru, \req, \ma, \mb}}\xspace}
\newcommand{\invpagesetnlru}{\ensuremath{\inv_{\pnlru, \localset, \ma,
      \mb}}\xspace}
\newcommand{\univ}{\ensuremath{\mathcal{I}}\xspace}
\newcommand{\hass}{Schedule-Explicit\xspace}
\newcommand{\newbacap}{Cyclic\xspace}
\newcommand{\newba}{cyclic\xspace}
\newcommand{\ba}{bijective analysis\xspace}
\newcommand{\bawithshort}{analysis\xspace}
\newcommand{\fullnewba}{\newba \bawithshort}
\newcommand{\fullnewbacap}{\newbacap \bawithshort}
\newcommand{\shortnewba}{cyclic analysis\xspace}
\newcommand{\eg}{e.g.,\xspace}
\newcommand{\lazy}{lazy\xspace}
\newcommand{\lm}{lazy\xspace}
\providecommand{\qedsymbol}{$\square$}
\DeclareRobustCommand{\qed}{%
  \ifmmode 
  \else \leavevmode\unskip\penalty9999 \hbox{}\nobreak\hfill
  \fi
  \quad\hbox{\qedsymbol}}
\newenvironment{proofsketch}	{\noindent\textsc{Proof Sketch.}\hspace{0.5em}}{\qed}
\begin{document}

\title{Beyond Worst-case Analysis of Multicore Caching Strategies}

\author{Shahin Kamali\thanks{Department of Computer Science,
    University of Manitoba, Winnipeg, Mb, Canada.
    Email:~\texttt{shahin.kamali@umanitoba.ca.}}
  \and
  Helen Xu\thanks{Computer Science and Artificial Intelligence Laboratory,
    Massachussetts Institute of Technology, Cambridge MA 02139
    USA. Email:~\texttt{hjxu@mit.edu}.  }
  }
\date{}

\maketitle







  Every processor with multiple cores sharing a cache needs to implement a
  cache-replacement algorithm. Despite the widespread use of multicore systems, our
  theoretical understanding of caching algorithms in the multicore setting lags far
  behind our understanding of the single-core setting. 
  In this paper, we consider the free-interleaving model for multicorep, where a set
  of cores issues online requests to a shared cache. If a requested page by a
  core $p$ is in the cache, $p$ can issue another request in the next
  timestep. Otherwise, it takes $\tau>1$ timesteps to fetch the requested page
  to the cache before $p$ can make its next request. Meanwhile, other cores can
  continue issuing requests while $p$ is fetching. Hence, a page-replacement
  policy implicitly defines a schedule that defines how requests are served at
  any time-step.

  Previous work demonstrated that the competitive ratio of a large class of
  online algorithms, including Least-Recently-Used (LRU), grows with the length
  of the input. Furthermore, even offline algorithms like Furthest-In-Future,
  the optimal algorithm in single-core caching, cannot compete in the multicore setting.
  These negative results 
  motivate a more in-depth comparison of multicore caching algorithms via alternative
  analysis measures. Specifically, the power of the adversary to adapt to online
  algorithms suggests the need for a direct comparison of online algorithms to
  each other.

  In this paper, we introduce cyclic analysis, a generalization of bijective analysis introduced
  by Angelopoulos \textit{et al.}  [SODA'07]. Cyclic analysis captures the
  advantages of bijective analysis while offering flexibility that makes it more useful for comparing algorithms for a variety online problems. In particular, we take the first steps beyond worst-case analysis for analysis of multicorep algorithms.  
We use cyclic analysis to establish
  relationships between multicorep algorithms, including the advantage of LRU over all
  other multicore caching algorithms in the presence of locality of reference.

\newcommand{\ssk}[1]{{\footnotesize \sf {\color{magenta}{#1}}}}

\section{Introduction}\label{sec:intro}
Despite the widespread use of multiple cores in a single machine, the
theoretical performance of even the most common cache eviction algorithms is not
yet fully understood when multiple cores simultaneously share a cache. Caching
algorithms for \mc architectures have been well-studied in practice, including
dynamic cache-partitioning heuristics~\cite{QurePa06, StoneTuWo92, SuhRuDe04}
and operating system cache
management~\cite{QureJaPa07,XieLo09,FedoSeSm06}. There are very few theoretical
guarantees, however, for performance of these algorithms. Furthermore, most
existing guarantees on online \mcp algorithms are
negative~\cite{LopezSa12a,KamaliXu20}, but resource augmentation may be helpful
in some cases~\cite{AgraBeDa20, AgrawalBDKPS21}.

In this paper, we explore the \defn{\mcp\footnote{This problem is also called
    ``paging'' in the literature~\cite{LopezSa12a}. We use ``\mcp'' because
    it more accurately reflects the problem studied in this paper.} problem} in which
multiple cores share a cache and request pages in an online manner. Upon serving
a request, the requested page should become available in the shared cache. If
the page is already in the cache, a hit takes place; otherwise, when the page is
not in the cache, the core that issues the request incurs a miss. In case of a
miss, the requested page should be fetched to the cache from a slow
memory. Fetching a page causes a \defn{fetch delay} in serving the subsequent
requests made by the core that incurs the miss. Such delay is captured by the
free-interleaving model of \mcp~\cite{LopezSa12a,
  KattiRa12}. 
Under this model, when a core incurs a miss, it spends multiple cycles fetching
the page from the slow memory while other cores may continue serving their
requests in the meantime. Therefore, an algorithm's eviction strategy not only
defines the state of the cache and the number of misses, but also the order in
which requests are served. That is,
a \paging algorithm implicitly defines a
``schedule'' of requests served at each timestep through its previous
eviction decisions.
%

\textbf{Divergence between \mc and single-core \paging.} Previous
work~\cite{LopezSa12a, KamaliXu20} leveraged the scheduling aspect of \mcp to
demonstrate that guarantees on competitive ratio\footnote{For a
  cost-minimization problem, an online algorithm has a competitive ratio of $c$
  if its cost on any input never exceeds $c$ times the cost of an optimal
  offline algorithm for the same input (up to an additive constant).} of algorithms
in the single-core setting do not extend to \mcp. In particular, L\'{o}pez-Ortiz
and Salinger~\cite{LopezSa12a} focused on two classical single-core \paging
algorithms, \textsc{Least-Recently-Used} (\lru)~\cite{SleaTa85} and
\textsc{Furthest-In-Future} (\fif)~\cite{Belady66}, and showed these algorithms
are unboundedly worse than the optimal algorithm \opt in the free-interleaving
model\footnote{\lru is an online \paging algorithm that evicts the
  least-recently-requested page. \fif is an offline \paging algorithm that
  evicts the page that will be requested furthest in the future.  Both
  algorithms evict pages only when the cache is full and there is a request to a
  page not in the cache. In the \mc setting, ties can happen; both \lru and \fif
  break ties arbitrarily.}. In the free-interleaving model, \fif evicts the page
furthest in the future in terms of the number of requests. In the single-core
setting, \lru is $k$-competitive (where $k$ is the size of the
cache)~\cite{SleaTa85}, and \fif is the optimal
algorithm~\cite{Belady66}. Kamali and Xu~\cite{KamaliXu20} further confirmed the
intuition that \mc \paging is much harder than single-core \paging and showed
that all \lm algorithms are equivalently non-competitive against \opt. An online
\paging algorithm is \defn{\lm}~\cite{ManaMcSl90} if it 1) evicts a page only if
there is a miss 2) evicts no more pages than the misses at each timestep, 3) in
any given timestep, does not evict a page that incurred a hit in that timestep,
and 4) evicts a page only if there is no space left in the cache\footnote{Lazy
  algorithms are often called ``demand paging'' in the systems
  literature~\cite{Salt74}. Algorithms with properties 1-3 (but not necessarily
  4) are called ``honest'' algorithms~\cite{LopezSa12a}.}.  Lazy algorithms
capture natural and practical properties of online algorithms. Common \paging
strategies such as \lru and First-In-First-Out (\FIFO) are clearly
lazy. Unfortunately, the competitive ratio of this huge class of algorithms is
bounded and grows with the length of the input.

The existing negative results for competitive analysis consider a cost model in
which the goal is to minimize the number of misses. Nevertheless, they extend to
the case when the objective is the total number of \tss to answer all
requests~\cite{KamaliXu20}. We focus on the latter measure in this paper because
it is more practical, as we will explain in detail in~\secref{prelim}.

At a high level, the divergence between performance of algorithms for \mc and
single-core \paging stems from the power of the adversary to adapt to online
algorithms and to generate inputs that are particularly tailored to harm the
schedule of online algorithms. For these adversarial inputs, the implicit
scheduling of lazy algorithms causes periods of ``high demand'' in which the
cache of the algorithm is congested (cores request many different pages).
Meanwhile, an optimal offline algorithm avoids these high-demand periods by
delaying cores in an ``artificial way''. These adversarial inputs highlight the
inherent pessimistic nature of competitive analysis. \vspace*{1mm}

\textbf{Beyond worst-case analysis.}  The highly-structured nature of the
worst-case inputs suggests that competitive analysis might not be suitable for
studying \mc \paging algorithms and motivates the study of alternatives to
competitive ratio. There are two main reasons to go beyond competitive analysis
for analysis of \mc \paging algorithms. First, competitive analysis is overly
pessimistic and measures performance on worst-case sequences that are unlikely
to happen in practice. In contrast, measures of typical performance are more
holistic than worst-case analysis, which dismisses all other sequences. Second,
competitive analysis does not help to separate online algorithms for \mcp because
no practical algorithm can compete with an optimal offline
algorithm~\cite{KamaliXu20}. Therefore, other measures are required to establish
the advantage of one online algorithm over others.  Many alternative measures
have been proposed for single-core \paging~\cite{Young98, Young94,
  KoutsoupiasPa00,KarlinPhRa92,BoyarLaNi01,BoyarFaLe05,Young02,Young00,BenBo94}. For
a survey of measures of online algorithms, we refer the reader
to~\cite{DorrigivLo05, Komm16, BoyarFaLa18}.
%
In particular, bijective analysis~\cite{AngeDoLo08,AngeSc09,AngeSc13} is a natural
measure that directly compares online algorithms and has been used to capture
the advantage of \lru over other online single-core \paging algorithms on inputs
with ``locality of reference''~\cite{AngeDoLo08,AngeSc09}. Despite these
results, as we will show, \ba has restrictions when it comes to \mcp.

\subsection{Contributions} We take the first steps beyond competitive analysis
for \mc \paging by extending \ba to a stronger measure named \fullnewba and
demonstrating how to apply \fullnewba to analyze \mcp algorithms. The
pessimistic nature of competitive analysis demonstrates the need for alternative
measures of online algorithms.

\vspace*{1mm} \textbf{\fullnewbacap.} We introduce \shortnewba, a measure that
captures the benefits of bijective analysis and offers additional flexibility
which we will demonstrate in our analysis of \mc \paging
algorithms. \fullnewbacap generalizes bijective analysis by directly comparing
two online algorithms over \emph{all inputs}.  Traditional bijective analysis
compares algorithms by partitioning the universe of \inps based on \inp length
and drawing bijections between \inps in the same
partition~\cite{AngeDoLo07,AngeSc09, AngeDoLo08,Dorr10}.  \fullnewbacap relaxes
this requirement by allowing bijections between \inps of different lengths. This
flexibility allows for alternative proof methods for showing relationships
between algorithms.

We show that all \lm~\cite{LopezSa12a, ManaMcSl90} algorithms are equivalent
under cyclic analysis. More interestingly, we show the strict advantage of any
lazy algorithm over Flush-When-Full (\FWF) under cyclic analysis (\FWF
evicts all pages upon a miss on a full cache). In the single-core setting, the
advantage of lazy algorithms over \FWF is strict and trivial: for any sequence,
the cost of LRU is no more than \FWF. In the \mc setting, however, such
separation requires careful design and mapping with a bijection on the entire
universe of \inps (\thmref{bijectivefwf}) under \shortnewba.

\vspace*{1mm} \textbf{Separation of \lru via \shortnewba.} Our main contribution
is to show the strict advantage of a variant of \lru over all other lazy
algorithms under \shortnewba combined with a measure of locality
(\thmref{lru-sep}). Although \lru is equivalent to all other lazy algorithms
without restriction on the \inps under \shortnewba, it performs strictly better
in practice~\cite{SilbGaGa14}. This is due to the locality of reference that is
present in real-world inputs~\cite{AlbersFaGi02, Denning68, ChroNo99}. In order
to capture the advantage of \lru, we apply \shortnewba on a universe that is
restricted to \inps with locality of reference~\cite{AlbersFaGi02} and show that
\lru is strictly better than any other lazy algorithm. 

\vspace*{1mm} \textbf{Map.}  The remainder of the paper is organized as follows.
\secref{prelim} describes the model of \mc \paging and provides definitions used
in the rest of the paper. \secref{cba} introduces \fullnewba and establishes
some useful properties of this measure.  \secref{bijective} applies \shortnewba
to establish the advantage of \lm algorithms over non-lazy
\FWF. 
\secref{surjective} 
shows the advantage of \lru over all other lazy algorithms under \shortnewba on
\inps with locality of reference. ~\secref{related} reviews related models of
\mcp, and ~\secref{conclusion} includes a few concluding remarks.  All omitted
proofs can be found in the full version. 

\section{Problem definition}\label{sec:prelim}
This section reviews the free-interleaving model~\cite{LopezSa12a, KattiRa12} of
\mcp and the cost models that are used in this paper. The free-interleaving
model is inspired by real-world architectures and captures the essential
aspects of the \mcp problem.

Assume we are given a \mc processor with $p$ cores labeled
$\core_1, \core_2, \ldots, \core_p$ and a shared cache with $k$ pages
($k \gg p$).

\vspace*{1mm}
\textbf{Input description.}  An \inp to the \defn{\mc \paging problem} is formed
by $p$ online sequences $\req = ( \req_1, \ldots, \req_p )$. Each core
$\core_i$ must serve its corresponding \defn{request sequence}
$\req_i = \langle \pagereq{1}, \ldots, \pagereq{n_i} \rangle$ made up of $n_i$
page requests. The total number of page requests is therefore
$n = \sum_{1\leq i\leq p} n_i$. Given a page (or sequence of pages) $\alpha$ and
a number of repetitions $r$, let $\alpha^r$ denote $r$ repetitions of requests
to $\alpha$.
We assume that for all values of $i$, the length of the request sequence $n_i$
is arbitrarily larger than $k$. That is, we assume that $k \in \Theta(1)$,
which is consistent with the common assumption that parameters like $k$ and
$\tau$ are constant compared to the length of the input.

All requests \pageij are drawn from a finite universe of possible pages
$U$.  Throughout this paper, we assume that request sequences for different
cores may share requests to the same page.  In practice, cores may share their
requests because of races, or concurrent accesses to the same page.

\vspace*{1mm} \textbf{Serving inputs.}  Page requests arrive at discrete
\tss. The requests issued by each core should be served in the same order that
they appear and in an online manner. More precisely, for all $i, j \geq 1$, core
$\core_i$ must serve request \pageij before $\pagereq{j+1}$, and $\pagereq{j+1}$
is not revealed before \pageij is served. The \mc processor may serve at most
$p$ page requests in parallel (up to one request per core\footnote{In practice,
  a single instruction of a core may involve more than one page, but we assume
  that each request is to one page in order to model RISC architectures with
  separate data and instruction caches~\cite{LopezSa12a}.}). Each page request
must be served as soon as it arrives. To serve a request to some page \pageij in
sequence $\req_i$, core $\core_i$ either has a \defn{hit}, when \pageij is
already in the cache, or incurs a \defn{miss}
when \pageij is not present in the cache. In case of a miss, the requested page
should be fetched into the cache. It takes $\tau$ timesteps to fetch a page into
the cache, where $\tau$ is an integer parameter of the problem. During these
timesteps, $\core_i$ cannot see any of its forthcoming requests, that is,
$\pagereq{j+1}$ is not revealed to $\core_i$ before \pageij is \defn{fully
  fetched}.  In case some other core $\core^* \neq \core_i$ is already fetching
the page when the miss occurs, $\core_i$ waits for less than $\tau$ timesteps
until the page is fully fetched to the cache.

\vspace*{1mm} \textbf{Free-interleaving model.}  A \mc \paging algorithm \ma
reads requests from request sequences in parallel and is defined by its eviction
decisions at each timestep. If a core misses while the cache is full, \ma must
evict a page to make space for the requested page before fetching it.  We
continue the convention~\cite{Hass10, LopezSa12a} that when a page is evicted,
the cache cell that previously held the evicted page is unused until the
replacement page is fetched. Finally, the processor serves requests from
different request sequences in the same timestep in some fixed order (\eg by
core index). In today's \mc systems, requests from multiple cores may reach a
shared cache simultaneously. If one core is delayed due to a request to a page
not in the cache, other cores may continue to make
requests. 
~\figref{example-serve} contains an example of serving an \inp with
\textsc{Least-Recently-Used} (LRU)~\cite{SleaTa85, Hass10, LopezSa12a} under
free interleaving.

\vspace*{1mm}
\textbf{Schedule.}  \MC \paging differs from single-core \paging because of the
scheduling component as a result of the fetch delay. The fetch delay slows down
cores at different rates depending on the 
misses they experience, and
requests with the same index on different cores may be served at different times
depending on previous evictions. In other words, the eviction strategy
implicitly defines a \defn{schedule}, or an ordering in which the requested
pages are served by an algorithm. Given an \inp \req defined by $p$ sequences,
the schedule of a \paging algorithm can be 
represented with a copy of \req in which some requests are repeated. These extra
requests captures the \ts at which the processor serves requests from that
input sequence using the \paging algorithm. That is, a schedule has all the same
requests as the corresponding input, but repeats page requests upon a miss
until the page has been fully fetched.

The schedule produced by \lru in the example \inp in~\figref{example-serve} is
the underlined request at each \ts (a formal definition of a schedule can be
found in~\secref{surjective}).

 \textbf{Cost model.} We use the \defn{total time} to measure
algorithm performance and denote the cost that an algorithm \ma incurs on \inp
\req with $\ma(\req)$. The non-competitiveness results from prior work in terms
of the number of misses also hold under the total time~\cite{LopezSa12a, KamaliXu20}.
\begin{figure*}[t]
\begin{center}
\scalebox{.9}{
\begin{tabular}{ c c r c c c }
  Timestep ($t$) & Cache before $t$ & $\req_1$, $\req_2$
  & Status & Schedule $S_{\req, \lru}[t]$ \\
  \hline \hline
  0& $\bot\bot\bot\bot$ & $\underline{a_1}a_2a_1a_5$  & \small{$P_1$ misses, starts fetching $a_1$}
   & $(a_1, a_3)$ \\
   & &  $\underline{a_3}a_4a_5a_2$  & \small{$P_2$ misses, starts fetching $a_3$} & & \\
  \hline
  1 & $\bot\bot\bot\bot$& ${\underline{a_1}}a_2a_1a_5$ & \small{$P_1$ is fetching $a_1$}  & $(a_1, a_3)$
\\
  &   &     ${\underline{a_3}}a_4a_5a_2$& \small{$P_2$ is fetching $a_3$}  & & \\
\hline
  2& $\bot\bot\bot\bot$ & ${\underline{a_1}}a_2a_1a_5$

  & \small{$P_1$ completes fetching $a_1$}  & $(a_1, a_3)$
  \\
    & &  ${\underline{a_3}}a_4a_5a_2$& \small{$P_2$ completes fetching $a_3$ } & & \\
\hline
  3& $a_1a_3\bot\bot$ & ${\color{lightgray}a_1}\underline{a_2}a_1a_5$

  &  \small{$P_1$ misses, starts fetching $a_2$}  & $(a_2, a_4)$
  \\
    & &  ${\color{lightgray}a_3}\underline{a_4}a_5a_2$     & \small{$P_2$ misses, starts fetching $a_4$ } & & \\
\hline
  4 & $a_1a_3\bot\bot$ & ${\color{lightgray}a_1}\underline{a_2}a_1a_5$
  & \small{$P_1$ is fetching $a_2$ } &  $(a_2, a_4)$
  \\
     & & ${\color{lightgray}a_3}\underline{a_4}a_5a_2$   & \small{$P_2$ is fetching $a_4$} & & \\
\hline
  5& $a_1a_3\bot\bot$ & ${\color{lightgray}a_1}\underline{a_2}a_1a_5$
  & \small{$P_1$ completes fetching $a_2$  } &  $(a_2, a_4)$
  \\
     & &  ${\color{lightgray}a_3}\underline{a_4}a_5a_2$   & \small{$P_2$ completes fetching $a_4$ } & & \\
\hline
  6 & $a_1a_3a_2a_4$& ${\color{lightgray}a_1a_2}\underline{a_1}a_5$
  &  \small{$P_1$ has a hit for $a_1$}  & $(a_1, a_5)$
  \\
   &  & ${\color{lightgray}a_3a_4}\underline{a_5}a_2$   & \small{$P_2$ misses, starts fetching $a_5$} & & \\
   & & & \small{($a_3$ is the least-recently-used page and evicted)} & & \\
\hline
  7 &  $a_1\bot a_2a_4$ & ${\color{lightgray}a_1a_2a_1}\underline{a_5}$
    & \small{$P_1$ misses, waits for $a_5$} & $(a_5, a_5)$
  \\
    & & ${\color{lightgray}a_3a_4}\underline{a_5}a_2$     & \small{$P_2$ is fetching $a_5$} & & \\
\hline
  8 & $a_1\bot a_2a_4$ & ${\color{lightgray}a_1a_2a_1}\underline{a_5}$ & \small{$P_1$ completes serving $a_5$}  & $(a_5, a_5)$
  \\
  &   &${\color{lightgray}a_3a_4}\underline{a_5}a_2$    &  \small{$P_2$ completes fetching (and serving) $a_5$} & & \\
\hline
  9& $a_1a_5a_2a_4$ &  ${\color{lightgray}a_1a_2a_1a_5}$  & \small{$P_1$ has completed $\req_1$}   & $(\bot, a_2)$
  \\
  &   & ${\color{lightgray}a_3a_4a_5}\underline{a_2}$   & \small{$P_2$ has a hit for $a_2$, completes $\req_2$}  & & \\
   \hline
\end{tabular}
}
\end{center}
\caption{ Example of execution of \lru on the input $\req = (\req_1, \req_2)$,
  with $\req_1 = \langle a_1a_2a_1a_5 \rangle$ and
  $\req_2 = \langle a_3a_4a_5a_2\rangle$. The cache size is $k = 4$ and the
  fetch delay is $\tau = 3$. We use $\bot$ in the cache to denote an empty slot
  or slot reserved for a page currently being fetched. \newline
  \hspace{\textwidth} If a request incurs a miss, we repeat it in the schedule
  at most $\tau$ times (or however long it takes to be fetched, if some other
  processor already requested it but it has not yet been brought to cache). For
  example, in timestep $7$, we wait two timesteps for $a_5$ to come to the cache
  for $\core_1$ because there were two more steps until $a_5$ was brought to the
  cache by $\core_2$. \newline \hspace{\textwidth}
  In the ``Cache before $t$'' column, we keep track of the state of
  the cache before each timestep.
  The rightmost column is the schedule generated by \lru serving \req. The
  makespan of \req under \lru is $10$. \newline \hspace{\textwidth} The schedule
  for the two cores $\core_1$ and $\core_2$ is defined respectively with
  \protect\ensuremath{\langle a_1, {\color{gray}a_1, a_1},
    a_2,{\color{gray}a_2,a_2},a_1,a_5,{\color{gray}a_5},\bot \rangle} and
  \protect\ensuremath{\langle a_3, {\protect\color{gray}a_3,
      a_3},a_4,{\protect\color{gray}a_4,a_4},a_5,\protect{\color{gray}a_5,a_5},a_2
    \rangle}.}
\label{fig:example-serve}
\end{figure*}


\begin{definition}[Total time]
  The total time an algorithm \ma takes to serve an \inp $\req$ is the sum of
  the timesteps it takes for all cores to serve their respective request
  sequences.  That is, the total time
  $\ma(\req) = \sum\limits_{1 \leq i \leq p} \ma(\req_i)$ where $\ma(\req_i)$
  denotes the timesteps $\core_i$ took to serve $\req_i$ with algorithm \ma.
\end{definition}

Total time combines aspects from both makespan and the number of misses,
the two cost measures in previous studies of \mc \paging~\cite{LopezSa12a,
  KattiRa12}. The makespan is the maximum time it takes any core to complete its
request sequence, and hence is bounded above by total time.  Specifically, the
total time is monotonically increasing with respect to both the number of misses
and the makespan.

The total time is a more realistic measure of performance than the number of
misses because it determines performance in terms of the time that it takes to
serve the input. In contrast, the number of misses does not directly correspond
with the time to serve an \inp because a miss may take less than $\tau$ steps
to fetch the page if it is already in the process of being fetched by another
core. 
%
The total time also captures aspects of algorithm performance that are not
addressed by makespan.  In particular, makespan does not capture the overall
performance of all cores.  For example, a solution in which all cores complete
at \ts $t$ has a better makespan than a solution in which one core completes at
\ts $t+1$ while the rest complete much earlier, e.g.\ at \ts $t/2$. The second
solution is preferred in practice (and also under the total time) as most cores
are freed up earlier.

%

\section{\fullnewbacap for online problems}\label{sec:cba}

We define a new analysis measure called \defn{\fullnewba} inspired by bijective
analysis~\cite{AngeReSc20,AngeDoLo07,AngeSc09, AngeDoLo08,Dorr10} and explore
alternative paths to showing relationships between algorithms under \shortnewba
via a relaxed measure called ``natural surjective'' analysis.
\fullnewbacap extends the
advantages of bijective analysis to online problems with multiple input
sequences. 

\vspace*{1mm} \textbf{Overview.} Although traditional bijective analysis has
been applied to compare single-core \paging algorithms, it requires modification
to capture the notion of ``input length" in \mcp. Since each request sequence in
an \inp for \mcp may have a different length in terms of the number of requests,
there are multiple ways to define the length of an \inp. It is not clear which
definition of length is most natural or correct for \mcp.

Furthermore, partitioning the input space based on the number of requests in an
\inp as in bijective analysis for single-core paging may be overly restrictive
for \mcp, because the time it takes to serve \inps of the same length (in terms
of the number of requests) may differ depending on the algorithm. In \mcp, the
time depends on the interleaving of the multiple request sequences. Cyclic
analysis addresses these issues by removing the restriction that bijections
should be drawn between \inps of the same length.

At a high level, in order to show a relationship between two algorithms \ma and
\mb under bijective analysis or \shortnewba, one must define a mapping between
inputs and their costs under different algorithms. One way to model mappings
between inputs with different costs 
is with a \defn{\cgraph}. Given algorithms $\ma$ and $\mb$, an \cgraph is an
infinite directed graph where the nodes represent inputs and there exists an
edge from \inp $\req_1$ to \inp $\req_2$ if and only if
$\ma(\req_1) \leq \mb(\req_2)$. In order to show the advantage of algorithm
$\ma$ over $\mb$, traditional bijective analysis partitions the (infinite) graph
of inputs into finite subgraphs, each formed by inputs of the same
length. Within each partition, the bijection relating $\ma$ to $\mb$ defines a
set of cycles such that each vertex is in exactly one cycle of finite length
{(cycles may have length one, i.e.\ they may be self-loops).}
Cyclic analysis relaxes the requirement that all subgraphs in the partition must
be finite, but also requires that each node in each induced subgraph must have
an in-degree and out-degree of one. That is, each node in the induced subgraph
is part of a cycle.

\vspace*{1mm} \textbf{Measure definition and discussion.}  Let \univ denote the
(infinite) set of all \inps, and for an algorithm \ma and \inp $\req \in \univ$,
let $\ma(\req)$ denote the cost \ma incurs while serving \req. The notation in
our discussions of \fullnewba is inspired by ~\cite{AngeDoLo07}.

\begin{definition}[\fullnewbacap]
    \label{def:bij-analysis}
    We say that an online algorithm \ma is \defn{no worse} than online algorithm
    \mb under \defn{\fullnewba} if there exists a
    bijection $\pi : \univ \leftrightarrow \univ$ satisfying
    $\ma(\req) \leq \mb(\pi(\req))$ for each $\req \in \univ$. We denote this by
    $\ma \bleq \mb$. Otherwise we denote the situation by $\ma \bnleq
    \mb$. Similarly, we say that \ma and \mb are the same according to
    \shortnewba if $\ma \bleq \mb$ and $\mb \bleq \ma$. This is denoted by
    $\ma \bequiv \mb$. Finally we say $\ma$ is better than \mb according to
    \shortnewba if $\ma \bleq \mb$ and $\mb \bnleq \ma$. We denote this by
    $\ma \bless \mb$.
  \end{definition}

  Bijective analysis is defined similarly, except that the input universe is
  partitioned based on the length of inputs, and bijections need to be drawn
  between inputs inside each partition. In contrast, cyclic analysis allows
  mapping arbitrary sequences to each other. Bijective analysis and \shortnewba
  have several benefits over competitive analysis~\cite{AngeSc09}. Specifically,
  they:
  \begin{itemize}[leftmargin=*]
    \setlength\itemsep{-.25em}
  \item \emph{capture overall performance}.  If $\ma \bleq \mb$, every ``bad''
    \inp for algorithm $\ma$ corresponds to another \inp for algorithm \mb which
    is at least as bad. Hence, the performance of algorithms is evaluated over
    all request sequences rather than a single worst-case sequence.
  \item \emph{avoid comparing to an offline algorithm}.  Competitive analysis is
    inherently pessimistic as it compares online algorithms based on their
    worst-case performance against a powerful adversary. This pessimism is
    especially pronounced in multicore \paging where an offline algorithm can
    ``artificially'' miss on some pages in order to schedule sequences in a way
    to minimize its total cost. This scheduling power is a great advantage for
    \opt as shown in~\cite{LopezSa12a}. Instead, we use \shortnewba because
  compares online algorithms directly without involving an offline algorithm.
\item \emph{can incorporate assumptions about the universe of
    \inps}. \fullnewbacap can also define relationships between algorithm
  performance on a subset $S \subset \univ$ of \inps. For example, applying
  \shortnewba to a restricted universe of \inps with locality of reference has
  been used to separate \lru from other algorithms in the single-core
  setting~\cite{AngeDoLo07,AngeSc09}. Since \lru exploits locality of reference,
  analyzing \inps with locality may yield a better understanding of the
  performance of algorithms. Most other measures such as competitive ratio are
  unable to separate \lru from other lazy algorithms~\cite{AngeDoLo07}.
\end{itemize}

As mentioned above, bijective analysis, as defined for single-core \paging
~\cite{AngeDoLo07,AngeSc09}, requires partitioning the universe of \inps $\univ$
into finite sets of inputs of the same length. For multicore \paging, however,
this partitioning 
is not necessary nor well-defined. In fact, for many online problems, the length
of input is not necessarily a measure of ``difficulty'', as trivial request
(e.g., repeating requests to a page) can artificially increase the
length. 
As
such, there is no priory reason to draw bijections between sequences of the same
length.

For problems such as single-core \paging and list update~\cite{AngeSc09}, where the input is
formed by a single sequence, the length of the input is simply the length of the
sequence. In \mc \paging, however, the length of \inps is not
well-defined as multiple sequences are involved.  
Should the length be
the sum of the number of requests or a vector of lengths for each request
sequence?  To address these issues, \shortnewba generalizes the finite
partitions of \ba to the entire universe of \inps. This would give \shortnewba a
flexibility that makes it possible to study other problems under this measure.
We note that, the restrictive nature of bijective analysis not only makes it
hard to study algorithms under this measure, but also can cause situations that
many algorithms are not comparable at all. The following example illustrates the
restriction of \ba when compared to \shortnewba:

\vspace*{1mm}
\textbf{Example.} Consider two algorithms \ma and \mb for an online problem P
(with a single sequence as its input).  Assume the costs of \ma and \mb are the
same over all inputs, except for four sequences.  Among these four, suppose that
two sequences $\sigma_1$ and $\sigma_2$ have the same length $m$ and we have
$\ma(\sigma_1) = 10$ and $\ma(\sigma_2) =40$ while $\mb(\sigma_1) = 20$ and
$\mb(\sigma_2)=30$. For inputs of length $m$, there is no way to define a
bijection that shows advantage of one algorithm over another. So, the two
algorithms are incomparable under bijective analysis. Next, assume for sequences
$\sigma_3$ and $\sigma_4$ we have
$\ma(\sigma_3)=20, \ma(\sigma_4)=30, \mb(\sigma_3)=40$, and
$\mb(\sigma_4)=20$. The following mappings shows $\ma \bless \mb$:
$\sigma_1 \rightarrow \sigma_1, \sigma_2\rightarrow \sigma_3,
\sigma_3\rightarrow \sigma_4$, and $\sigma_4\rightarrow \sigma_2$.

\vspace*{1mm}
\textbf{Bounding \inps with the same cost.}  In order for \fullnewba to be a
meaningful measure, 
there must
not be an infinite number of \inps that achieve the same cost. To be more precise, for the
universe of inputs \univ and an algorithm \ma, let $\ma(\univ)$ be the
corresponding multiset of costs associated with inputs in \univ.

\begin{definition}[\Ccp]
  A cost measure for an online problem satisfies the \ccp if and
  only if for any algorithm $\ma$ and for all unique costs $m \in \ma(\univ)$,
  the set of inputs that achieve that cost is bounded.
\end{definition}

If a cost measure does not satisfy the \ccp, it is possible to
prove contradicting results under \shortnewba. That is, if there are infinitely
many \inps that achieve each cost, for any algorithms $\ma, \mb$, it is possible
to define bijections such that $\ma \bless \mb$ and $\mb \bless \ma$.

In the case of \mc \paging, the total time and makespan cost models both have the
\ccp while the miss count and the closely related miss rate do not. For
example, the infinitely many sequences that only request some page $\alpha$
(e.g. $\alpha$, $\alpha\alpha$, $\alpha\alpha\alpha$, \ldots) all have cost one
under miss count, but all have different costs under total time and makespan.


The following lemma guarantees that \fullnewba has the ``to-be-expected''
property that if algorithm $\ma$ is better than $\mb$, $\mb$ is not better than
$\ma$. In the case of bijective analysis, this property easily follows from the
fact that bijections are drawn in finite sets (formed by inputs of the same
length). Since the bijections in \fullnewba are defined in an infinite space, a
more careful analysis is required.

  \begin{restatable}{lemma}{ccpcontra}
  Given algorithms $\ma, \mb$ for a problem satisfying the \ccp, it is
  not possible that $\ma \bless \mb$ and $\mb \bless \ma$ at the same time.
  \end{restatable}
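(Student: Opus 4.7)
I argue by contradiction. Suppose both $\ma \bless \mb$ and $\mb \bless \ma$ hold. Unpacking Definition~\ref{def:bij-analysis}, I extract two bijections $\pi, \sigma : \univ \to \univ$ witnessing $\ma \bleq \mb$ and $\mb \bleq \ma$ respectively, so that $\ma(\req) \leq \mb(\pi(\req))$ and $\mb(\req) \leq \ma(\sigma(\req))$ for every $\req \in \univ$.

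Composing these, I set $\tau = \sigma \circ \pi$, a bijection on $\univ$ with
\[ \ma(\req) \leq \mb(\pi(\req)) \leq \ma(\sigma(\pi(\req))) = \ma(\tau(\req)) \]
for every $\req$. The goal is then to show $\ma(\tau(\req)) = \ma(\req)$ for every $\req$: once equality propagates through the chain above, the bijection $\pi^{-1}$ witnesses $\mb \bleq \ma$ with equality, directly contradicting the $\mb \bnleq \ma$ clause inside $\ma \bless \mb$.

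To force $\ma \circ \tau = \ma$, I invoke the \ccp. For each cost value $c$, the sub-level set $U_c = \{\req \in \univ : \ma(\req) \leq c\}$ is a finite union of cost classes, each finite by BSCP, and is therefore itself finite (costs are non-negative integers for total time, makespan, and miss count). Because $\tau^{-1}$ is $\ma$-non-increasing, $\tau^{-1}(U_c) \subseteq U_c$; injectivity plus finiteness upgrade the inclusion to equality, so $\tau$ permutes $U_c$. Summing the pointwise inequality $\ma(\req) \leq \ma(\tau(\req))$ over the permuted finite set $U_c$ gives equal sums, hence termwise equality. Letting $c$ grow extends $\ma(\tau(\req)) = \ma(\req)$ to all of $\univ$.

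The main obstacle is exactly the jump from ``$\tau$ is $\ma$-monotone'' to ``$\tau$ is $\ma$-preserving''. Without the \ccp, a sub-level set may be infinite, and an order-preserving bijection on an infinite set can shift mass upward indefinitely without ever being forced into equality (think of $n \mapsto n{+}1$ on $\mathbb{N}$), mirroring precisely the pathology described in the paragraph preceding the lemma. The \ccp prevents this shift-type counterexample and lets the finite-set counting step close the argument.
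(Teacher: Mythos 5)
Your proof is correct, but it takes a genuinely different route from the paper's. The paper argues by a direct cardinality obstruction: from $\ma \bless \mb$ it extracts an \inp on which the witnessing bijection $\pi$ is strictly cost-increasing, fixes the threshold $c = \ma(\sigma)$, and observes that the set of \inps with $\ma$-cost at most $c$ is then strictly larger than the set with $\mb$-cost at most $c$ (both finite by the \ccp), so no bijection can witness $\mb \bleq \ma$ --- a pigeonhole argument comparing sub-level sets \emph{across the two algorithms}. You instead compose the two witnessing bijections into a single bijection of \univ that is non-decreasing in $\ma$-cost, show it must permute each finite sub-level set of a \emph{single} algorithm, and use a summation argument over that finite permuted set to force pointwise cost preservation, hence equality all along the chain. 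Your route buys a slightly stronger structural conclusion (any pair of mutually dominating bijections is forced to be cost-exact, so $\ma \bleq \mb$ and $\mb \bleq \ma$ can only coexist as an equivalence) and avoids the paper's somewhat underjustified step that $|\univ^{\ma(\sigma)}_\ma| > |\univ^{\ma(\sigma)}_\mb|$; the paper's route is shorter and exhibits the obstruction more explicitly. Both arguments share the same hidden assumption, which you at least state out loud: the \ccp bounds each cost class, but finiteness of a sub-level set additionally needs only finitely many distinct cost values below any threshold (true here because total time is a positive integer).

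Two minor points. First, under the paper's literal definition $\ma \bless \mb$ already contains the clause $\mb \bnleq \ma$, while $\mb \bless \ma$ contains $\mb \bleq \ma$, so the hypotheses are formally contradictory before any work is done; the substantive content both you and the paper actually establish is that a dominance bijection with at least one strict inequality precludes \emph{any} reverse dominance bijection, and your final contradiction should be read as contradicting the strictness of $\pi$ rather than re-deriving $\mb \bleq \ma$, which your hypothesis $\mb \bless \ma$ already handed you. Second, your names $\tau$ and $\sigma$ collide with the paper's fetch delay and page/input notation; rename them if this is to sit alongside the paper's text.
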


\begin{proof}
  If $\ma \bless \mb$, by~\defref{bij-analysis}, there must exist an \inp
  $\sigma \in \univ$ such that $\ma(\sigma) < \mb(\pi(\sigma))$. Let $\sigma$ be
  the \inp with the smallest cost that differs between $\ma, \mb$, and let
  $\univ^{\ma(\sigma)}_\ma, \univ^{\ma(\sigma)}_\mb \subset \univ$ be the
  sequences that have cost at most $\ma(\sigma)$ in $\ma(\univ), \mb(\univ)$,
  respectively. By the \ccp, $ |\univ^{\ma(\sigma)}_\ma|$ and
  $|\univ^{\ma(\sigma)}_\mb|$ are both bounded and
  $|\univ^{\ma(\sigma)}_\ma| > | \univ^{\ma(\sigma)}_\mb|$. It is impossible to
  define another function $\phi$ such that $\mb \bleq \ma$ because there are not
  enough \inps in $\univ^{\ma(\sigma)}_\mb$ to map to all \inps in
  $\univ^{\ma(\sigma)}_\ma$ such that the cost of each \inp under $\mb$ is at
  most the cost of the corresponding \inp under $\ma$.
\end{proof}

  Similarly, if $\ma \bless \mb$, then $\ma \not\equiv_c \mb$ for problems with the
  \ccp. Additionally, \shortnewba has the transitive property: if
  $\ma \bleq \mb$ and $\mb \bleq \mathcal{C}$, then $\ma \bleq \mathcal{C}$.
  The \ccp guarantees that each node in the \cgraph has infinite out-degree but
  finite in-degree because each input has infinitely many inputs that cost more
  than it and finitely many inputs that cost less than it.

\vspace*{1mm}
\textbf{Relation of surjectivity to \shortnewba.}
In the remainder of the section we will discuss the role of surjective mappings
as an intermediate step before defining a bijective mapping between infinite
sets. In traditional bijective analysis, since the \inp set is finite because of
the length restriction, any surjective mapping must also be bijective.  In some
problems, including \mc \paging, it may be easier to define a surjective mapping
between the \inps. We will first show that a class of surjective mappings can be
converted into bijective mappings.

Suppose we have a surjective but not necessarily injective mapping between two
infinite sets $f: X \rightarrow Y$.  For all positive integers
$m \in \mathbb{N}$, let $X_m \subseteq X, Y_m \subseteq Y$ be subsets of the
pre-image and image respectively such that exactly $m$ elements in $X_m$ map to
one element in $Y_m$. That is, given some $m$, $x \in X_m$ implies that there
are $m-1$ other elements $x_1, x_2, \ldots, x_{m-1} \neq x$ such that for
$i = 1, \ldots, m-1$, $f(x) = f(x_i)$. Each $X_m, Y_m$ is an element of a
partition of the pre-image and image, respectively.

\begin{definition}[\surjcap surjective mapping]
  \label{def:nat-surj}
  Given a surjective function $f: X \rightarrow Y$, $f$ is \defn{\surj} if and
  only if for all $m \in \mathbb{N}$, the partitions $X_m$ and $Y_m$ are either
  empty or infinite.
\end{definition}

For example, the function
$f: \mathbb{N} \rightarrow \mathbb{N}, f(x) = \floor{x/2}$ is a \surj surjective
mapping (assuming $0 \in \mathbb{N}$) because exactly two elements in the
pre-image map to each element in the image. In contrast,
$g: \mathbb{Z} \rightarrow \mathbb{N}, g(x) = |x|$ is not natural because there
is only one element in $X_1$ and $Y_1$ at $x = y = 0$.

We introduce \defn{natural surjective (NS) analysis}, a technique to compare
algorithms under \shortnewba using an intermediate surjective \emph{but not
  injective} mapping. The formalization is almost identical
to~\defref{bij-analysis}, but the function $\pi$ needs only to be a natural
surjective function. We use $\sleq$ to denote the relation between two
algorithms under NS analysis.  In the rest of the paper, we will refer to
natural surjective functions and natural surjective analysis as surjective
functions and surjective analysis, respectively.


\begin{figure}[t]
  \centering
  \scalebox{.8}{
    \begin{tikzpicture}[shorten >=1pt,node distance=1.0cm,on grid,auto] 
     \Large
        	\node at (0,3.75) {$X_2$};            	\node at (1.85,3.75) {$Y_2$};
        	\node at (6,3.75) {$X_2$};            	\node at (7.85,3.75) {$Y_2$};
    \large
	\node at (0,3) {$x_{1,1}$}; \draw[->] (0.5,3) -- (1.5,3);  \node at (1.9,3) {$y_1$};
	\node at (0,2.25) {$x_{1,2}$}; \draw[->] (0.5,2.25) -- (1.5,2.8);
	\node at (0,1.5) {$x_{2,1}$}; \draw[->] (0.5,1.5) -- (1.5,1.5);  \node at (1.9,1.5) {$y_2$};
    \node at (0,.75) {$x_{2,2}$}; \draw[->] (0.5,0.75) -- (1.5,1.3);
    \node at (0,0) {$\ldots$};      \node at (1.75,0) {$\ldots$};
    \Large

    \node at (3.6,1.75)  {$\xRightarrow[\text{\ }]{\text{unzip}}$};

    \large
	\node at (6,3) {$x_{1,1}$}; \draw[->] (6.5,3) -- (7.5,3);  \node at (7.9,3) {$y_1$};
	\node at (6,2.25) {$x_{1,2}$}; \draw[->] (6.5,2.25) -- (7.5,2.25);      \node at (7.9,2.25) {$y_2$};
	\node at (6,1.5) {$x_{2,1}$}; \draw[->] (6.5,1.5) -- (7.5,1.5);  \node at (7.9,1.5) {$y_3$};
    \node at (6,.75) {$x_{2,2}$}; \draw[->] (6.5,0.75) -- (7.5,0.75);\node at (7.9,.75) {$y_4$};
    \node at (6,0) {$\ldots$};      \node at (7.75,0) {$\ldots$};
  \end{tikzpicture}
  }
  \caption{ Example of unzipping $X_2, Y_2$ in a \surj surjective mapping.}
 \label{fig:unzip}
\end{figure}
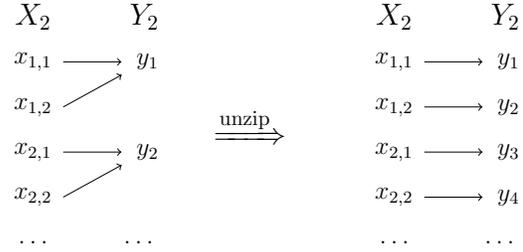

\begin{restatable}[``Unzipping'' equivalence]{lemma}{unzipping}
  \label{lem:surj-bij-equiv}
  Let algorithms $\ma, \mb$ be algorithms for a problem with the \ccp. If
  $\ma \sleq \mb$ under a \surj surjective mapping, then $\ma \bless \mb$.
  \end{restatable}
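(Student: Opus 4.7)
The plan is to convert the natural surjection $\pi$ into a bijection $\pi'$ that still witnesses the cost inequality, thereby giving $\ma \bleq \mb$, and then to rule out $\mb \bleq \ma$ via a counting argument analogous to the one used in the preceding lemma. Non-injectivity of $\pi$ (which accompanies NS analysis by design) will supply the strictness.

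For the bijection, I will process each partition $X_m$ separately, in the spirit of Figure~\ref{fig:unzip}. Fix $m$ with $X_m$ (and hence $Y_m$) non-empty, so both are countably infinite by naturality. I enumerate $Y_m = \{y^m_1, y^m_2, \ldots\}$ in non-decreasing order of $\mb$-cost; the \ccp forces each cost value to appear in $Y_m$ only finitely often, so this enumeration is well-defined. For each $y^m_i$, list its $m$ preimages as $x^m_{i,1}, \ldots, x^m_{i,m}$ and set
\[
\pi'(x^m_{i,j}) \;=\; y^m_{m(i-1)+j}, \qquad j = 1, \ldots, m.
\]
Taking the union of these partition-wise maps over all $m$ (where $X_1$ is already injective and needs no modification) produces a bijection $\pi' : \univ \to \univ$.

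The cost inequality transfers from $\pi$ to $\pi'$ through the monotone enumeration: we already know $\ma(x^m_{i,j}) \leq \mb(y^m_i)$ from $\pi$, and since $m(i-1)+j \geq i$, sortedness of $Y_m$ gives $\mb(y^m_{m(i-1)+j}) \geq \mb(y^m_i) \geq \ma(x^m_{i,j})$. This establishes $\ma \bleq \mb$. For strictness, non-injectivity of $\pi$ forces some partition $X_m$ with $m \geq 2$ to be non-empty; picking $y \in Y_m$ with $m$ preimages and any threshold $c \geq \mb(y)$, one has
\[
|\{x : \ma(x) \leq c\}| \;\geq\; |\pi^{-1}(\{z : \mb(z) \leq c\})| \;>\; |\{z : \mb(z) \leq c\}|,
\]
where the first inequality follows from $\ma(x) \leq \mb(\pi(x))$ and the second from $y$ having $m \geq 2$ preimages. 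Both sides are finite by the \ccp, so repeating the cardinality argument of the preceding lemma rules out any bijection $\phi$ witnessing $\mb \bleq \ma$, delivering $\ma \bless \mb$.

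The main obstacle is ensuring the cost-sorted enumeration of $Y_m$ is compatible with the index shift $i \mapsto m(i-1)+j$. Without the \ccp, a single cost class could absorb infinitely many elements of $Y_m$, and the shifted index could land on a cheaper $y^m_\ell$, breaking the inequality. The naturality condition (every non-empty partition is infinite) together with the \ccp are precisely the hypotheses that make the unzipping cost-monotone; both are used essentially in the argument.
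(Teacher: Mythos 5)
Your proof is correct and uses essentially the same unzipping construction as the paper: enumerate each $Y_m$ in non-decreasing cost order (possible by the \ccp), remap the $m$ preimages of $y^m_i$ to $y^m_{m(i-1)+1},\ldots,y^m_{mi}$, and use sortedness of the enumeration to transfer the cost inequality to the new bijection. You do go beyond the paper's write-up in two respects. First, your index formula $m(i-1)+j$ genuinely covers all of $Y_m$, whereas the paper's $f_b(x_{i,j}) = y^m_{mi+j-1}$ skips $y^m_1,\ldots,y^m_{m-1}$ (a harmless off-by-one, but yours is the version that actually yields a bijection). Second, the paper's proof stops after constructing $f_b$ with $\ma(\sigma)\leq\mb(f_b(\sigma))$, i.e., it only establishes $\ma\bleq\mb$; you explicitly supply the other half of $\bless$, namely $\mb\bnleq\ma$, via the cardinality comparison of $\{x:\ma(x)\leq c\}$ and $\{z:\mb(z)\leq c\}$, correctly sourcing the strict inequality from the non-injectivity that NS analysis presupposes. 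This half is required by the stated conclusion and is only implicit in the paper (through the argument of the preceding lemma), so your treatment is the more complete one. One caveat you share with the paper: enumerating $Y_m$ in non-decreasing cost order tacitly assumes the cost values are well-ordered (e.g., integer-valued); the \ccp alone only bounds each individual cost class. For total time and makespan this is harmless, but it is worth stating.
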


\begin{proof}
  At a high level, we will describe how to convert a \surj surjective function
  $f$ into a bijective mapping $f_b$ by ``unzipping'' any many-to-one mappings
  in each partition. At a high level, the new mapping $f_b$  ``remaps'' elements
  in the preimage to elements in the image.

  Let $X_m, Y_m$ be the pre-image and image of a non-empty mapping-based
  partition for any fixed $m \in \mathbb{N}$. Suppose we order the elements in
  $Y_m$ from lowest to highest and let $y_i$ be the $i$-th largest element in
  $Y_m$. The elements in any set $Y_m$ can be ordered because of the \ccp. Given
  an element $y^m_i \in Y_m$, let the corresponding elements in the pre-image be
  $x_{i, j} \in X_m$ for $j = 1, 2, \ldots, m$ in some order. Since
  $\ma(x_{i, j}) \leq \mb(y^m_i)$ for all $i, j$ (by the definition of surjective
  analysis), for any $i, j$, $\ma(x_{i, j}) \leq \mb(y^m_z)$ for $z >
  i$. Therefore, we define a new bijective mapping $f_b$ based on $f$ such that
  $f_b(x_{i, j}) = y^m_{mi + j - 1}$. The new mapping $f_b$ satisfies the property
  that for all $\sigma \in \univ$, $\ma(\sigma) \leq \mb(f_b(\sigma))$.
\end{proof}

 As shown in the example in~\figref{unzip}, we can convert a natural surjective
  mapping to a bijective one by ``unzipping'' the mapping and maintaining the
  relative order of \inps. 




The relationship between surjective analysis and \shortnewba allows
for different paths to proving relationships between algorithms. In traditional
bijective analysis, we had to define a direct bijection between two algorithms
because all surjections are bijections in finite sets of the same size. Natural
surjective analysis is a potentially easier proof technique that is equivalent to 
\shortnewba.


\section{\fullnewbacap for \mcp}\label{sec:bijective}
It is straightforward to show that all \lm \mcp algorithms are equivalent under
cyclic analysis (see~\proporef{lazybijective} for a full proof). Therefore, to
show a separation between two algorithms, we analyze a variant of \FWF that
flushes (empties) the entire cache if it incurs a miss when the cache is full.
In what follows, we show the advantage of \lm algorithms over \FWF.  While this
result is not surprising, the techniques used in the proofs prepare the reader
for the more complicated proof in the next section.
%

\begin{restatable}{lemma}{nicebijection}
  \label{lem:niceBijection}
  Assume $p=2$. Consider two \lm \paging algorithms \ma and \mb which have the
  same eviction policy starting at the same timestep $t$ and cache contents at
  $t$ except for one page $x$ that is present in the cache of \ma and absent in
  the cache of \mb. If \ma and \mb incurred the same cost up until timestep $t$,
  we have $\ma \bless \mb$.
\end{restatable}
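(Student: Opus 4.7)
My plan is to establish $\ma \bless \mb$ by exhibiting a natural surjective mapping $\pi : \univ \to \univ$ with $\ma(\req) \le \mb(\pi(\req))$ for every $\req$, lifting it to a bijection via \lemref{surj-bij-equiv}, and then ruling out $\mb \bleq \ma$ using a single witness input combined with the counting argument that underlies antisymmetry of $\bless$ for problems with the \ccp.

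\textbf{Coupling.} For each input $\req$, I would couple the runs of $\ma$ and $\mb$ on $\req$. By hypothesis the two algorithms behave identically up to time $t$. After $t$ I maintain the invariant that $\ma$'s cache contains exactly one page $x^*$ (initially $x$) absent from $\mb$'s cache, and $\mb$'s cache contains at most one page $y^*$ (initially the empty slot) absent from $\ma$'s. Request by request: (i) if the request is in neither or both of $x^*,y^*$ then both algorithms behave identically; (ii) a request to $x^*$ causes $\ma$ to hit, while $\mb$ misses, fetches $x^*$, and evicts a page that becomes the new $y^*$; (iii) a request to $y^*$ is symmetric with (ii). The identities of $x^*$ and $y^*$ therefore evolve in a well-defined way driven by the shared eviction policy.

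\textbf{Defining $\pi$ by progressive relabeling.} I would define $\pi(\req)$ by simulating the coupled execution and, at each type-(iii) divergence event, swapping the labels $x^* \leftrightarrow y^*$ in all subsequent requests of $\req$ until the next divergence. On $\pi(\req)$, $\mb$'s sequence of hits and misses mirrors $\ma$'s on $\req$: every extra miss $\mb$ incurs on $\pi(\req)$ corresponds, position by position, to an extra miss $\ma$ would incur on $\req$, giving $\ma(\req)\le\mb(\pi(\req))$.

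\textbf{Natural surjectivity and the witness.} Two inputs that differ only in their post-$t$ tails beyond the last divergence event lie in the same fiber of $\pi$, and extending any such tail with arbitrary further requests shows every non-empty fiber is infinite, so $\pi$ meets \defref{nat-surj}. Thus $\ma \sleq \mb$, and \lemref{surj-bij-equiv} upgrades this to $\ma \bleq \mb$. For strict inequality I take $\req^\star$ requesting $x$ on $\core_1$ at time $t+1$ and continuing with any suffix that triggers no further $y^*$-event: $\pi$ acts as the identity on $\req^\star$, $\ma$ hits while $\mb$ misses, so $\mb(\pi(\req^\star)) - \ma(\req^\star) \ge \tau-1$. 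Since total time satisfies the \ccp, the counting argument used in the antisymmetry lemma then forbids $\mb \bleq \ma$, giving $\ma \bless \mb$.

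\textbf{Main obstacle.} The hardest part will be verifying that the coupling invariant survives the scheduling desynchronization triggered by $\mb$'s extra misses. When $\mb$ stalls for $\tau$ timesteps on an $x^*$-miss, its sibling core may race ahead, issue additional requests, and possibly cause evictions before the stalled core resumes, opening the door for a second, independent divergence that would ruin the clean single-pair invariant. Exploiting $p=2$ and laziness, I would argue that the pages evicted by $\mb$ during such a race coincide with those evicted by $\ma$ in a suitably time-shifted counterpart execution, so only the $x^* \leftrightarrow y^*$ pair is ever affected and $\pi$ remains a well-defined consistent relabeling.
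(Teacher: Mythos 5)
Your plan founders on exactly the point you flag as the ``main obstacle,'' and the difficulty is not a technicality that can be patched by a time-shift argument --- it is the reason the paper's proof is structured entirely differently. Your mapping $\pi$ only relabels pages; it never changes which \tss each core is busy fetching. But in the free-interleaving model the cost and the eviction decisions depend on the \emph{schedule}: when \mb misses on $x^*$ on $\pi(\req)$ while \ma hits on $x^*$ on $\req$, the core serving $x^*$ falls $\tau-1$ \tss behind under \mb, so from that point on the two cores' requests interleave differently in the two executions. A recency-based (or any schedule-dependent) eviction policy then sees the requests in a different order, and the caches can diverge by more than one page even though both algorithms run ``the same policy.'' Your single-pair invariant $(x^*,y^*)$ is therefore not preserved, and no relabeling of page identities can restore it, because the divergence is in \emph{timing}, not in \emph{labels}. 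The paper's proof avoids maintaining any running invariant at all: it defines an involution that swaps the roles of the two cores' requests at time $t$ and, crucially, \emph{pads} one continuation with $\tau$ repeated requests ($x^\tau$ in its Case 1, $a^\tau$ in its Case 2) so that in the mapped execution the two cores begin serving the continuations $\sigma,\sigma'$ with exactly the same relative offset as in the original execution, and with identical cache contents. After that single aligned divergence the two executions are literally the same, so only the cost of the length-$O(\tau)$ prefix needs to be compared by hand. Changing the length of the mapped input is precisely the freedom that \shortnewba grants and that your length-preserving relabeling does not exploit.

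Two smaller issues. First, your surjectivity discussion is internally inconsistent: a pure relabeling is injective, so its fibers cannot be infinite, and the appeal to \lemref{surj-bij-equiv} is unnecessary for such a map (the paper instead builds a genuine $2$-cycle bijection directly, and needs no unzipping here). Second, your strictness argument via a single witness plus the counting argument is essentially the right idea and matches what the paper implicitly does, so that part is fine once the mapping itself is repaired.
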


\begin{proof}
  At a high level, we will define a surjective cyclic mapping on the input space
  with cycles of length 2.
    For \inps where $x$ is never requested before being evicted, \ma and \mb
  perform similarly. We assume these \inps are mapped to themselves and ignore
  them (the cycles associated with these inputs are self-loops). In the
  remainder of the proof, we assume $x$ is requested at timestep $t$ before
  being evicted. At timestep $t$, \ma has a hit on the request to $x$ while
  \mb incurs a miss. As a result, the schedule of the two algorithms (i.e., the
  order at which they serve the requests) becomes different after serving $x$
  and hence there is no guarantee that \ma has less cost that
  \mb.

  We define a bijection $b$ in a way that the schedule of \ma for any input $R$
  is similar to that of \mb for serving $b(\req)$. The bijection that we define
  creates cycles of length 2: if $\req' = b(\req)$ then $\req =
  b(\req')$; 
  we denote this by $\req \leftrightarrow \req'$.

  Let $\core_1$ and $\core_2$ denote the two cores and let
  \cont{$\sigma_1$}{$\sigma_2$} denote the \defn{continuation} of a sequence
  where $\core_1$ asks for sequence $\sigma_1$ and $\core_2$ asks for $\sigma_2$
  from time $t$ onward.
We define the bijection based on two cases. In both cases, one of the cores, say
$\core_2$, has a request to page $x$ at time $t$ and hence \ma and \mb perform
differently on the continuation of the sequence. Assume the contents of the
caches of \ma and \mb at time $t$ are respectively $H \cup \{x\}$ and $H$.

\hspace{1mm} \textbf{Case 1:} $\core_1$ requests a page $q \notin H$.

Recall that $\core_2$ asks for $x$ at time $t$, so the input can be written as
$\req = $ \cont{$q \sigma$}{$x \sigma'$} for some $\sigma$ and $\sigma'$.  We
define $\req' = $ \cont{$x^\tau \sigma$}{$q \sigma'$}.
To show the mapping $\req \leftrightarrow \req'$ is a valid mapping we need to
show $\ma(\req) \leq \mb(\req')$ and $\ma(\req') \leq \mb(\req)$. First, we show
$\ma(\req) \leq \mb(\req')$.
%
%
On input \req, \ma has a miss on $q$ and a hit on $x$ at time $t$; so, \ma
starts serving $\sigma$ and $\sigma'$ at \tss $t+\tau$ and $t+1$, respectively,
it serves $\sigma$ exactly $\tau-1$ \tss later than $\sigma'$. On input $\req'$,
\mb has a miss on both $x$ and $q$ at time $t$. It incurs an additional
$\tau-1$ hits on $x$ after fetching it. So, \mb starts serving $\sigma$ and
$\sigma'$ at \tss $t+\tau + (\tau-1)$ and $t+\tau$, respectively. In other
words, it serves $\sigma$ exactly $\tau-1$ \tss later than $\sigma'$. The
content of the cache of \ma and \mb is the same for serving $\sigma$ and
$\sigma'$. We conclude that the number of misses (and hence total time) of \mb
in serving $\sigma$ and $\sigma'$ in $R$ is the same as \ma in $\req'$. For the
first requests to $q$ and $x$ in $\req$, \ma incurs one miss (and total time
$\tau+1$) while \mb incurs two misses (and total time $3\tau-1$) for the first
requests to $x^\tau$ and $q$ in $\req'$. We conclude that $\ma(\req) <
\mb(\req')$.
%
To complete the proof in Case 1, we should show $\ma(\req') \leq
\mb(\req)$. When \ma serves $\req'$, it incurs $\tau$ hits on $x^\tau$ and one
miss on $q$; as such, it starts serving $\sigma$ and $\sigma'$ at the same time
$t+\tau$. On the other hand, when \mb serves $\req$, it incurs a miss on both
$q$ and $x$ and starts serving $\sigma$ and $\sigma'$ at the same time
$t+\tau$. So, the two algorithms incur the same cost for serving $\sigma$ and
$\sigma'$. Moreover, $\ma$ one miss and $\tau$ hits (and total time $2\tau$)
for serving $x^\tau$ and $q$ while \mb incurs two misses (and total time
$2\tau$) for serving $q$ and $x$, so $\ma(\req') = \mb(\req)$.


  \hspace{1mm} \textbf{Case 2:} $\core_1$ asks for a page $a \in H$.

  So, the input can be written as $\req = $ \contt{$a \sigma$}{$x \sigma'$} for
  some sequence of requests $\sigma$ and $\sigma'$. We define $\req' = $
  \contt{$x \sigma$}{$a^\tau \sigma'$}. To show the mapping
  $\req\leftrightarrow \req'$ is a valid mapping we first show
  $\ma(\req) \leq \mb(\req')$. \ma starts serving both $\sigma$ and $\sigma$ in
  $\req$ at $t+ 1$ because \ma has hits on both $a$ and $x$. On the other hand,
  \mb has a miss on $x$ and a hit on $a$ when serving all copies of $\tau$. That
  means, it starts serving both $\sigma$ and $\sigma'$ in $\req'$ at the same
  time $t+\tau$. The content of the cache of the two algorithms is also the same
  ($x$ is now in the cache of \mb).
  So, \ma and \mb incur the same number of misses (and total time) for both
  $\sigma$ and $\sigma'$. For the prefixes $a$ and $x$ in $\req$, \ma incurs 0
  misses (and total time 2); for the prefixes $a^\tau$ and $x$ in $\req'$, \mb
  incurs 1 miss (and total time $2\tau$). We conclude $\ma(\req)<\mb(\req')$.
  Next, we show $\ma(\req') \leq \mb(\req)$. \ma has hits on all requests in
  $a^\tau$ and $x$ in $\req'$, i.e., it serves $\sigma$ and $\sigma'$ at \tss
  $t+1$ and $t+\tau$, respectively. That is, it serves $\sigma$ exactly $\tau-1$
  units later than $\sigma'$. \mb, on the other hand, has a hit at $a$ and a
  miss at $x$ in $\req$, i.e. it serves $\sigma$ and $\sigma'$ at times $t+1$
  and $t+\tau$, respectively. So, the two algorithms incur the same cost for
  $\sigma$ and $\sigma'$. For the prefixes $a^\tau$ and $x$, \ma incurs 0 misses
  and total time $\tau+1$. For the prefixes $a$ and $x$, \mb incurs 1 miss and
  total time $\tau+1$. We conclude that $\ma(\req') \leq \mb(\req)$.
\end{proof}

\vspace*{2mm}
We show the advantage any lazy algorithm \ma over non-lazy \FWF by comparing
their cache contents at each timestep.

\begin{restatable}{theorem}{bijectivefwf}\label{thm:bijectivefwf}
  Any lazy algorithm \ma is strictly better than \FWF under cyclic analysis for
  $p=2$, that is, $\ma \bless \FWF$.
\end{restatable}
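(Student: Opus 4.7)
The plan is to construct a \surj surjective mapping $\pi:\univ \to \univ$ satisfying $\ma(\req) \leq \FWF(\pi(\req))$ with strict inequality on some input, then invoke \lemref{surj-bij-equiv} to upgrade this to the bijection witnessing $\ma \bless \FWF$. The entire behavioral gap between a lazy \ma and \FWF is located at \emph{flush events}: when \FWF has a miss on a full cache it evicts all $k$ resident pages, whereas \ma evicts only one. Immediately after such a flush at some timestep $t$, the cache of \FWF contains only the newly fetched page while the cache of \ma still contains a set $H$ of $k-1$ additional retained pages.

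I would split the universe into inputs on which \FWF never flushes and inputs on which it does. On the first subset, \FWF acts exactly as a lazy algorithm, so $\ma(\req) = \FWF(\req)$ holds since all lazy algorithms are equivalent under cyclic analysis (\proporef{lazybijective}), and I map each such input to itself. On the second subset, I induct on the number of flushes that \FWF performs. For the base case of one flush at time $t$, I close the $k-1$-page cache gap by $k-1$ ``peeling'' steps, one per retained page $x \in H$. Each peeling step follows the template of \lemref{niceBijection}: depending on whether the other core's pending request lies in $H$ or not, I apply either Case~1, inserting a block $x^\tau$ into the tail of $\req$, or Case~2, inserting $a^\tau$ for some $a \in H$, producing a sibling input in a two-cycle. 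Composed, the $k-1$ peelings align \ma's schedule on $\req$ with \FWF's schedule on the transformed input, while loading the cost of each flushed-page miss onto \FWF. The inductive step then applies the same peeling to the suffix beginning at the next flush event.

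The main obstacle will be verifying that these iterated two-cycle constructions compose into a single globally well-defined \surj surjective mapping on $\univ$. Each individual peeling is a two-cycle on a distinguished slice of the universe, so naively composing $k-1$ peelings at each flush, and then across all flush events, risks producing overlapping or conflicting image assignments; I would fix a canonical ordering in which the pages of $H$ are peeled and the flush events are processed so that composition is unambiguous. The \ccp then guarantees that each nonempty partition $X_m, Y_m$ of the resulting mapping is infinite, so \lemref{surj-bij-equiv} applies. Strict inequality $\ma(\req) < \FWF(\pi(\req))$ follows directly from either case of \lemref{niceBijection} on any input containing a flush, and such inputs clearly exist, giving the desired strict separation $\ma \bless \FWF$.
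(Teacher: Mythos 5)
You have correctly located the engine of the proof --- the entire gap between a lazy \ma and \FWF sits at flush events, where the caches differ by $k-1$ pages, and \lemref{niceBijection}'s two-case insertion of $x^\tau$ or $a^\tau$ blocks is the right local tool for closing a one-page difference. But the obstacle you flag at the end is not a technicality to be patched with a ``canonical ordering''; it is the missing structural idea, and your construction does not go through as stated. \lemref{niceBijection} compares two \emph{algorithms} that share an eviction policy and differ in exactly one cached page at one timestep; it produces, for that pair of algorithms, a bijection on the whole universe. When you try to ``peel'' $k-1$ pages by iterating the lemma's template at the input level, the intermediate transformed inputs have no algorithm anchoring them: after the first peel, the sequence $\pi_1(\req)$ is supposed to be served by something whose cache differs from \ma's by one page and from \FWF's by $k-2$ pages, but you never name that object, so the composed map does not witness $\ma(\req) \leq \FWF(\pi(\req))$ for any single well-defined $\pi$. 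Worse, the case split (Case~1 vs.\ Case~2) for the second peel depends on the cache state and pending requests \emph{after} the first peel's insertion has already perturbed the schedule, so the peels do not commute and cannot be defined independently and then composed.

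The paper resolves exactly this by interpolating through explicit hybrid algorithms rather than composing input transformations: it defines $\FWF_i$, which on a full-cache miss evicts $i$ pages chosen by \ma's policy, so that $\ma = \FWF_1$ and $\FWF = \FWF_k$, and then a second, time-indexed hybrid $\FWF_i^t$ that runs $\FWF_i$ through timestep $t$ and $\FWF_{i+1}$ afterward. Adjacent time-hybrids $\FWF_i^{t+1}$ and $\FWF_i^t$ differ by exactly one cached page at one timestep, which is precisely the hypothesis of \lemref{niceBijection}, and each comparison yields its own bijection on all of $\univ$; the chain $\ma = \FWF_1 \bless \cdots \bless \FWF_k = \FWF$ then follows from transitivity of $\bleq$, which composes the bijections for you. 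This also removes any need for your detour through natural surjective mappings and \lemref{surj-bij-equiv}, and it sidesteps your induction on the number of flushes, which is fragile because the flush structure of the transformed input need not match that of $\req$. To repair your argument, introduce the intermediate algorithms explicitly and let transitivity do the composition.
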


\begin{proof}
  Let $\FWF_i$ be a variant of \FWF which, instead of flushing the cache, evicts
  $i$ pages from the cache; these $i$ pages are selected according to \ma's
  eviction policy. That is, the algorithm evicts $i$ pages that \ma evicts when
  its cache is full (as an example, if \ma is \LRU, the algorithm evicts the $i$
  least-recently-used pages). We will show $\ma \bless \FWF$ by transitivity of
  bijection. In particular, we show
  $$\ma = \FWF_1 \bless \ldots \bless \FWF_{k-1} \bless \FWF_k =
  \FWF.$$

  Let $\FWF_i^t$ be an algorithm that applies $\FWF_i$ for the first $t$
  timesteps and $\FWF_{i+1}$ for timesteps after and including $t+1$. If we can
  show $\FWF_i^{t+1} \bless \FWF_i^t$ for all $t$, again by transitivity of
  bijection, we get $\FWF_i \bless \FWF_{i+1}$. We note that $\FWF_i^{t+1}$ and
  $\FWF_i^{t}$ differ in serving at most one request at time $t$, and they have
  the same eviction strategy for the remainder of the input. If the cores do not
  incur a miss at time $t$, both algorithms perform similarly. For sequences for
  which there is a miss at time $t$, there will be one less page in the cache of
  $\FWF_i^t$ compared to $\FWF_i^{t+1}$. Therefore,
  $\FWF_i^{t+1} \bless \FWF_i^t$ by~\lemref{niceBijection}.
\end{proof}

As the bijection in the proofs illustrates, the main insight of \shortnewba is
the direct comparison of algorithms by drawing mappings between \inps of
different lengths. In contrast 
to the single-core setting, \inps of the same length (in the number of requests)
in \mcp may take different amounts of time, so we define bijections based on the
schedule (and therefore length in time) rather than the number of requests. In
the next section, we use the same idea of mapping sequences with different
lengths in requests but similar schedules.


\section{Advantage of \lru with locality of reference}\label{sec:surjective}

To demonstrate how to use \shortnewba to separate algorithms, this section
sketches the separation of \lru from all other \lm algorithms on \inps with
locality of reference via \shortnewba. Along the way, we demonstrate how to use
surjective analysis to establish relations between algorithms under
\shortnewba. In practice, \lru (and its variants) are empirically better than
all other known \paging algorithms~\cite{SilbGaGa14} because sequences often have
temporal locality.

The full proofs for this section can be found in the full version.

\subsection{Preliminaries}

First, we will formalize the notion of a schedule from~\secref{prelim}, which
represents an algorithm's eviction decisions by repeating requests in an \inp on
a miss. We will use the schedule to later define locality of
reference. Throughout this section, let \ma be a \paging algorithm and \req be an
\inp.

\begin{definition}[Schedule]
  \label{def:schedule-formal}
  The \defn{schedule}
  $\skedar = \{\sked_{\req_1, \ma}, \ldots, \sked_{\req_p, \ma}\}$ is another
  \inp where each request sequence is defined as the implicit schedule that \ma
  generated while serving \req. That is, $\sked_{\req_i,\ma}[t]$ is the request
  that core $\core_i$ serves at timestep $t$ under \ma. Also, \skedar is the
  same as $\req$ with each miss repeated at most $\tau - 1$ times (as many
  repetitions as it takes to resolve the given miss, which might be less than
  $\tau - 1$ if the page was already in the process of being fetched). We use
  $\sked_{\req_i, \ma}[t_1, t_2]$ (for all $i$) to denote all requests
  (including repetitions due to misses) made by $\core_i$ between timesteps
  $t_1$ and $t_2$ (inclusive).
\end{definition}

We use the formal definition of schedule to discuss dividing up an \inp under
\ma based on its schedule up until some \ts.

\begin{definition}[Schedule prefix and suffix]
 \label{def:sked-pre-suf}
 Let \nar be the time required for \ma to serve \req. Given an integer \ts
 $j < \nar$, we define parts of the schedule that will be served before, after,
 and during \ts $j+1$.

 Informally, the \defn{schedule prefix} $\skedprej$ is all the requests served
 up to \ts $j$ with repetitions matching scheduling delay, the schedule at \ts
 $j+1$, $\skedar[j+1]$, is all requests served at \ts $j+1$, and the
 \defn{schedule suffix} \skedsufj is all requests served after \ts $j+1$ with
 repetitions matching scheduling delay.  Note that \skedpre or \skedsuf may be
 empty.  When the \ts $j$ and/or algorithm $\alg$ are clear from context,
 we will drop them from the schedule notation.
\end{definition}

\begin{definition}[Request prefix and suffix]
  \label{def:req-pre-suf}
  Let $\req^{\leq j, \ma}$ be all subsequences from \req served up to \ts $j$,
  $\req^{> j, \ma}$ be all subsequences from \req served after \ts $j$, and
  $r_{j+1}$ be the requests at \ts $j+1$.  For simplicity, we define the
  \defn{request prefix} as $\reqpre = \req^{\leq j, \ma}$ and \defn{request
    suffix} as $\reqsuf = \req^{> j+1, \ma}$ when $j, \ma$ are understood from
  context.
\end{definition}

The request prefix and suffix formalizes the analysis technique
from~\secref{bijective} of defining mappings based on the continuation of the
\inp after some \ts.

Using the \lru example in~\figref{example-serve} when $j = 4$,
$\reqpre_1 = a_1a_2$, $\reqpre_2 = a_3a_4$ because those are the pages that have
been requested until \ts $4$. Since at \ts $5$ all cores are fetching requests,
$r_{j+1} = \emptyset$. Also, $\reqsuf_1 = a_1a_5$ and $\reqsuf_2 = a_5a_2$
because those are the requests remaining after \ts $5$.  Similarly,
$\sked^{\text{pre}}_{\req_1} = a_1a_1a_1a_2$ and
$\sked^{\text{pre}}_{\req_2} = a_3a_3a_3a_4$. At \ts $4$, both cores are
fetching, so $r_{j+1} = (a_2, a_4)$. The suffix is the schedule for \tss after
$4$, so $\sked^{\text{suf}}_{\req_1} =a_2a_1a_5a_5$ and
$\sked^{\text{suf}}_{\req_2} =a_4a_5a_5a_5a_2$.

\paragraph{Locality of reference and the Max-Model.}
We will restrict the space of all inputs with the ``Max-Model'', an
experimentally-validated model of locality of reference that limits the number
of distinct pages in subsequences of an \inp with a concave
function~\cite{AlbersFaGi02}.

We define a \defn{window} of size $w$ in the \mc setting as $p$ runs of
consecutive requests of length $w$ (one for each core). The Max-Model for
\mc \paging is the same as in single-core \paging except that it considers
windows over all cores.

In the Max-Model for \mc \paging, an input $\req$ is \defn{consistent} with some
increasing concave function $f$ if the number of distinct pages in any window of
size $w$ is at most $f(w)$, for any $w \in \mathbb{N}$~\cite{AlbersFaGi02}. That
is, a function $f:\mathbb{N} \rightarrow \mathbb{R}^+$ is \defn{concave} if
$f(1) = p$, and $\forall n \in \mathbb{N}: f(n+1) - f(n) \leq f(n+2) - f(n+1)$.
In the Max-Model, we also require that $f$ is surjective on the integers between
$p$ and its maximum value.

It is easy to adapt \shortnewba to the Max-Model by restricting to \inps
consistent with a concave function $f$ (denoted by $\localset$).  Let
$\alg \bleq^f \mb$ denote that $\alg$ is no worse than $\mb$ on \localset under
\shortnewba.  Similarly, let $\ma \sleq^f \mb$ denote that $\ma$ is no worse
than $\mb$ on \localset under surjective analysis.

\subsection{Advantage of \lru on \inps with locality}

In the rest of the section, we will show that \lru is no worse than sequences
with locality under \shortnewba by establishing a surjective mapping
(\defref{nat-surj}) and converting it into a bijective mapping
(\lemref{surj-bij-equiv}). The main technical challenge in the proof of the
separation of \lru is that sequences with the same number of requests may have
different schedules and therefore may differ significantly in their cost, even
if they only differ in one request. We use \shortnewba to avoid the restriction
of comparing \inps of the same length and instead define a function to relate
\inps of the same cost.

Along the way, we demonstrate how to use surjective analysis as a proof
technique for comparing algorithms via \shortnewba on the entire space of \inps
as described in~\secref{cba}. The construction of the surjective mapping is
inspired by a similar argument in the single-core setting by Angelopoulos and
Schweitzer~\cite{AngeSc09} which establishes a bijective mapping within finite
partitions, but requires a more complex mapping based on schedules.

We will show that for every algorithm \alg, $\lru \sleq^f \alg$. An arbitrary
algorithm $\alg$ may be very different from \lru.  Therefore, instead of
defining a direct bijection, we will use intermediate algorithms
$\mb_1, \ldots, \mb_\ell$ such that
$\alg \equiv \mb_1 \succeq^f_s \ldots \succeq^f_s \mb_i \succeq^f_s \ldots
\succeq^f_s \mb_\ell \equiv \lru$. The result follows from the transitivity of
the ``$\sleq^f$'' relation.  Intuitively, we construct algorithms ``closer'' to
\lru at each step in the series as we will explain in~\lemref{lru-surjection}.
We formalize the notion of an algorithm \ma's ``closeness'' to \lru in terms of
the evictions that it makes.  An algorithm \ma is \defn{\lru -like} at timestep
$t$ if after serving all requests up to time $t-1$, it serves all requests at
time $t$ as \lru would.



\vspace*{1mm}
\paragraph{Defining a surjective  mapping between \inps.}
At a high level, the proof proceeds by defining a surjection between similar
sequences with two pages swapped. We define a ``complement'' of a sequence as a
new sequence with certain pages swapped, and show properties of complements of
sequences with locality required for our main proof.

\begin{definition}[Complement \cite{AngeSc09}] \label{def:complement}
  Let $\beta, \delta$ denote two distinct pages in $U$, the universe of
  pages. Let $\req_i[j]$ denote the $j$-th request in the \ith request sequence
  of an \inp \req. The \defn{complement} of $\req_i[j]$ with respect to $\beta$
  and $\delta$, denoted by $\overline{\req_i[j]}^{(\beta, \delta)}$, is the
  function that replaces $\beta$ with $\delta$, and vice versa. Formally,
  $\overline{\req_i[j]}^{(\beta, \delta)} = \delta$, if $\req_i[j] = \beta$;
  $\overline{\req_i[j]}^{(\beta, \delta)} = \beta$, if $\req_i[j] = \delta$; and
  $\overline{\req_i[j]}^{(\beta, \delta)} = \req_i(j)$,
  otherwise.  \end{definition}

We use $\overline{\req_i[j]}$ when
$\beta, \delta$ are clear from context.  We denote each request sequence
$\req_i = \sigma^i_1 \ldots \sigma^i_{n_i}$, where $\req_i$ has $n_i$
requests. For any sequence for a single core $\req_i$,
$\overline{\req_i} = \overline{\req_i[1]}, \ldots, \overline{\req_i[n_i]}$. For
any \mc sequence \req,
$\overline{\req} = \{\overline{\req_1}, \ldots, \overline{\req_p}\}$.  For any
sequence $\req_i$, let $\req_i[j_1,j_2]$ denote the (contiguous) subsequence of
requests $\sigma_{i, j_1}, \ldots, \sigma_{i, j_2}$. Also, we use
$\req_\alpha \cdot \req_\gamma$ to denote the concatenation of two sequences
$\req_\alpha, \req_\gamma$.

We now extend a lemma from~\cite{AngeSc09} about sequences with locality that we
will use in our main theorem later. The lemma says that if a sequence
$\ldots\delta\ldots\beta\ldots\delta\ldots\beta\ldots$ exhibits locality of
reference, then $\ldots \delta \ldots \beta \ldots \beta \ldots \delta$ does as
well.

\begin{restatable}{lemma}{localorder}
  \label{lem:local-order}
  Let \req be a sequence of requests consistent with $f$, \ma be a \paging
  algorithm, and \nar be the time that it takes \alg to serve
  \req. Let $j \leq \nar$ be an (integer) \ts such that $\skedar[1,j]$
  contains a request to $\beta$, and in addition, $\delta$ does not appear in
  $\skedpre = \sked_{\req, \ma}[1,j]$ after the last request to $\beta$ in
  $\skedpre$.

  Let $\req' = \reqpre\overline{\reqsuf}$ denote the sequence
  $\req^{\leq j, \ma}\overline{\req^{>j, \ma}}$, and suppose that $\req'$ is not
  consistent with $f$. Then \reqsuf contains a request to $\beta$; furthermore,
  no request to $\delta$ in \skedsuf ($\skedsuf = \skedar[j+1,\nar])$ occurs
  earlier than the first request to $\beta$ in \skedsuf.
\end{restatable}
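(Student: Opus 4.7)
The plan is to argue by contradiction via a compact case analysis of a violating window. Assume $\req'$ is not consistent with $f$, so some window $W$ of size $w$ has strictly more than $f(w)$ distinct pages in $\req'$ while having at most $f(w)$ in $\req$. Since the swap $\beta \leftrightarrow \delta$ is a bijection on pages, a window lying entirely inside $\reqpre$ or entirely inside $\overline{\reqsuf}$ cannot change its distinct-page count. Hence $W$ must straddle the prefix/suffix boundary. Decompose $W = W_p \cup W_s$ according to whether its requests are served by time $j$ (the prefix part $W_p$) or after (the suffix part $W_s$). Enumerating how the counts of $\beta$ and $\delta$ in $W_p$ and $W_s$ transform under the swap shows that the distinct-page count strictly increases in exactly two configurations: \emph{Case A}, in which $\beta$ appears in both $W_p$ and $W_s$ and $\delta$ is absent from $W$; or \emph{Case B}, in which $\delta$ appears in both $W_p$ and $W_s$ and $\beta$ is absent from $W$.

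For the first conclusion that $\reqsuf$ contains $\beta$, it suffices to rule out Case B and deduce Case A. In Case B, the $\delta$ in $W_p$ lies in some core $i_0$'s subwindow of $W$ at a request served at some timestep $t_\delta$ in $\skedpre$. The hypothesis that $\delta$ does not appear in $\skedpre$ after the last $\beta$ in $\skedpre$ forces $t_\delta \leq t_\beta^\ast$, where $t_\beta^\ast$ is the time of that last $\beta$. Using that each core serves its requests in order of position and that $W_s$ is nonempty, I would track the positions served by core $i_0$ in the interval $[t_\delta, t_\beta^\ast]$ and show that the request corresponding to the $\beta$ at time $t_\beta^\ast$ must itself lie inside $W$, contradicting the absence of $\beta$ in $W$. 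Hence Case A holds, and the $\beta$ in $W_s$ is a request in some $\reqsuf_i$, so $\reqsuf$ contains $\beta$.

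For the second conclusion that no $\delta$ in $\skedsuf$ precedes the first $\beta$ in $\skedsuf$, I would argue analogously by contrapositive. Suppose some $\delta$ in $\skedsuf$ is served strictly earlier than any $\beta$ in $\skedsuf$ (the case of no $\beta$ in $\skedsuf$ at all being included). Pick a violating window $W$ as before; Case B is again ruled out by the same per-core schedule argument. In Case A, the $\beta$ in $W_s$ is served at some time $t > j$; by the contrapositive hypothesis, some $\delta$ in $\skedsuf$ is served strictly before $t$. A symmetric per-core analysis, tracking the core hosting the $\beta$ of $W_s$ backwards in time from $t$, forces that earlier $\delta$ to correspond to a request sitting inside $W$, contradicting the absence of $\delta$ in $W$. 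Both conclusions therefore follow whenever $\req'$ is inconsistent with $f$.

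The main obstacle is lifting the standard single-core version of this argument (in which request order and schedule order coincide along a single sequence) to the multicore setting, where $W$ decomposes into per-core subwindows and different cores' schedules progress independently. The required step is to track, within the specific core containing the critical $\delta$ (or $\beta$) in $W_p$ (or $W_s$), the interaction between that core's per-position request indices and the global schedule timesteps $t_\delta$, $t_\beta^\ast$, $j$, and the time of the first $\delta$ in $\skedsuf$. Once the right per-core invariants are isolated, each case reduces to a one-core combinatorial argument that mirrors the single-core proof of Angelopoulos and Schweitzer.
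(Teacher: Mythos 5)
Your skeleton matches the paper's: locate a violating window, observe that it must meet both \reqpre and $\overline{\reqsuf}$, reduce to a case analysis on which of $\beta,\delta$ occur in the prefix part $W_p$ and the suffix part $W_s$, and use the hypothesis on \skedpre to eliminate the bad configuration. Your enumeration showing that the distinct-page count can increase only in Case A or Case B is correct, and is actually stated more carefully than in the paper, which runs the parity argument core-by-core and thereby quietly assumes that some single core's run straddles the boundary --- an assumption your union decomposition $W = W_p \cup W_s$ does not need.

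The gap is exactly at the step you flag as the main obstacle, and it is not cosmetic. To rule out Case B you argue that the $\delta \in W_p$, hosted by core $i_0$ and served at time $t_\delta \le t_\beta^{*}$, forces the last prefix-$\beta$ (served at $t_\beta^{*}$) into $W$. In the single-core model this is immediate: positions and schedule times are co-monotone along one sequence, so $j_1 \le \mathrm{pos}(\delta) \le \mathrm{pos}(\beta) \le j < j_2$ traps that $\beta$ inside the window. In the multicore model the last prefix-$\beta$ may be issued by a core $i_2 \ne i_0$, and $W$'s length-$w$ run on core $i_2$ is an arbitrary run of that core's sequence; nothing in the per-core orderings places the position of that $\beta$ inside $[j_{i_2,1}, j_{i_2,2}]$. ``Tracking the positions served by core $i_0$'' cannot resolve this, because the request you need to trap lives on core $i_2$. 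The same cross-core escape afflicts your argument for the second conclusion: the early $\delta$ in \skedsuf and the $\beta \in W_s$ may sit on different cores, and only the same-core case yields $j_{i,1} \le t_{i,j} < \mathrm{pos}(\delta) \le \mathrm{pos}(\beta) \le j_{i,2}$. Closing the gap requires either an alignment property of multicore windows (so that any request served at a time between two requests of $W$ is itself in $W$) or an argument confined entirely to the single core hosting the offending occurrence; be aware that the paper's own proof asserts the corresponding containment (``$\req_i[j_{i,1}, t_{i,j}]$ contains a request to $\beta$ but not to $\delta$'') without justification, so it does not supply the missing step either.
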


The following lemma guarantees that for any algorithm \ma which may make a
non-\lru -like eviction at the $(j+1)$-th timestep of some $\req \in \seqset^f$
(but will make \lru -like evictions for the rest of the timesteps after $j+1$),
we can define an algorithm \mb that makes the same decisions as \ma up until
timestep $j$ of any sequence in $\seqset^f$, makes an \lru -like decision on the
$(j+1)$-th timestep, and is no worse than \ma under surjective analysis.

  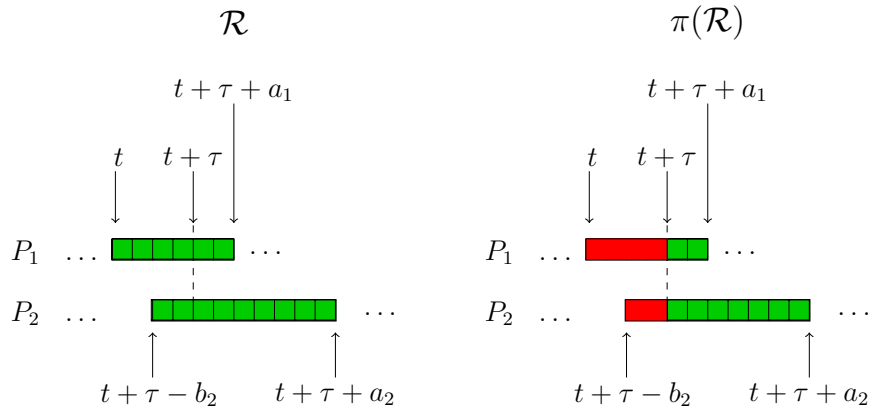
\begin{figure*}[t]
    \centering
    \scalebox{0.9}{
  \begin{tikzpicture}[shorten >=1pt,node distance=1.0cm,on
    grid,auto] 
\newlength{\shft}
\setlength{\shft}{7cm}
     \Large
    \node at (3,5) {\req };
    \large
	\node at (3,4) {$t+\tau+a_1$}; \draw[->] (3,3.8) -- (3,2);
	\node at (2.4,3) {$t+\tau$}; \draw[->] (2.4,2.8) -- (2.4,2); \draw[-,dashed] (2.4,2) -- (2.4,.6);
	\node at (1.3,3) {$t$}; \draw[->] (1.25,2.8) -- (1.25,2);
	\node at (.35,1.6) {$\core_1 \hspace*{3mm}\ldots$ };	\node at (3.4,1.6) {$\hspace*{1.4mm}\ldots$ };
	\node at (.35,.7) {$\core_2 \hspace*{3mm}\ldots$ };    \node at (5,.7) {$\hspace*{3.3mm}\ldots$ };
	\filldraw[step=3mm,fill=black!20!green](1.2,1.8) grid (3,1.485)rectangle (1.2,1.8);
	\filldraw[step=3mm,fill=black!20!green](1.78,.9) grid (4.51,.585)rectangle (1.78,.9);
	\node at (1.9,-.5) {$t+\tau-b_2$}; \draw[->] (1.8,-.25) -- (1.8,0.45);
	\node at (4.5,-.5) {$t+\tau+a_2$}; \draw[->] (4.5,-.25) -- (4.5,0.45);

	\begin{scope}[shift={(7,0)}]
     \Large
    \node at (3,5) {$\pi(\req)$}; 
    \large
	\node at (3,4) {$t+\tau+a_1$}; \draw[->] (3,3.8) -- (3,2);
	\node at (2.4,3) {$t+\tau$}; \draw[->] (2.4,2.8) -- (2.4,2); \draw[-,dashed] (2.4,2) -- (2.4,.6);
	\node at (1.3,3) {$t$}; \draw[->] (1.25,2.8) -- (1.25,2);
	\node at (.35,1.6) {$\core_1 \hspace*{3mm}\ldots$ };	\node at (3.4,1.6) {$\hspace*{1.4mm}\ldots$ };
	\node at (.35,.7) {$\core_2 \hspace*{3mm}\ldots$ };    \node at (5,.7) {$\hspace*{3.3mm}\ldots$ };
	\filldraw[step=3mm,fill=black!20!green](1.2,1.8) grid (3,1.485)rectangle (2.4,1.8);\draw[fill=red]  (2.4,1.8) rectangle(1.2,1.485);
	\filldraw[step=3mm,fill=black!20!green](1.78,.9) grid (4.51,.585)rectangle (2.4,.9) ;\draw[fill=red]  (2.4,.585) rectangle(1.78,.9);
	\node at (1.9,-.5) {$t+\tau-b_2$}; \draw[->] (1.8,-.25) -- (1.8,0.45);
	\node at (4.5,-.5) {$t+\tau+a_2$}; \draw[->] (4.5,-.25) -- (4.5,0.45);
	\end{scope}
      \end{tikzpicture}
      }
      \caption{An example of the mapping of an \inp $\req$ under algorithm \mb
        to $\pi(\req)$ under with $\tau = 4$. On the left, an \inp $\req$ where
        $a_1 = 2, b_2 = 2, a_2 = 7$. The green boxes indicate hits on a page
        $\sigma$ in \mb's cache but not in algorithm \ma's cache. On the right,
        we show the corresponding $\pi(\req)$. The red boxes denote misses on
        $\sigma$. }
 \label{fig:invpage}
\end{figure*}

\begin{restatable}{lemma}{lrusurj}
  \label{lem:lru-surjection}
  Let $\localset$ be \emph{all} \inps consistent with $f$ and let $j$ be an
  integer.  Suppose \ma is an algorithm with the property that for every \inp
  $\req \in \seqset^f$, \alg is \lru -like on timestep $t+1$, for all
  $t \geq j + 1$. Then there exists an algorithm \mb with the following
  properties:
  \begin{enumerate}
    \item For every input $\req \in \seqset^f$, \mb makes the same decisions as
      \ma on the first $j$ timesteps while serving $\req$ (i.e.\ , \ma and \mb
      make the same eviction decisions for each miss in requests up to and
      including time $t$).
    \item For every input $\req \in \seqset^f$, \mb is \lru -like on \req at
      timestep $t$.
    \item $\mb \sleq^f \ma$.
  \end{enumerate}
\end{restatable}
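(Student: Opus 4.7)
The plan is to construct \mb explicitly and then prove property~(3) by exhibiting a natural surjection on inputs. Let \mb mimic \ma on the first $j$ timesteps, make the \lru -like eviction at timestep $j+1$, and thereafter follow the same \lru -like rule that is guaranteed for \ma by hypothesis. Properties~(1) and~(2) then hold by construction, so everything reduces to producing a natural surjection $\pi : \localset \to \localset$ with $\mb(\req) \le \ma(\pi(\req))$ for every $\req \in \localset$.

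For the surjection, inputs on which \ma and \mb make identical decisions at step $j+1$ are mapped to themselves. Otherwise, let $\beta$ be the page \ma evicts at step $j+1$ and $\delta$ the page \mb evicts (the \lru choice, so $\delta$ is least-recently-used in \skedpre). Immediately after step $j+1$ the caches agree except that \mb holds $\beta$ while \ma holds $\delta$. The natural candidate is $\pi(\req) = \reqpre \cdot \overline{\reqsuf}^{(\beta, \delta)}$, swapping $\beta$ and $\delta$ throughout the request suffix. When $\pi(\req) \in \localset$, the schedule produced by \mb on \req matches, step by step, the schedule produced by \ma on $\pi(\req)$ up to exchanging the two page labels, so the hit/miss pattern and hence the total time are preserved; in fact $\mb(\req) = \ma(\pi(\req))$.

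The principal obstacle is the case $\pi(\req) \notin \localset$, i.e., when complementing the suffix destroys locality. Here I would invoke \lemref{local-order}, which under the hypotheses on $\beta, \delta$ in \skedpre forces \reqsuf to contain a request to $\beta$ and forbids any request to $\delta$ in \skedsuf before the first $\beta$. This asymmetry makes it possible to truncate the swap: apply the $(\beta, \delta)$-complement only up to the earliest re-synchronization point, namely the first timestep after which the cache of \mb on \req coincides with the cache of \ma on the modified sequence, and then copy the remainder of \reqsuf verbatim. Locality is preserved because outside the swap region the input is unaltered, and the cost bound $\mb(\req) \le \ma(\pi(\req))$ follows from a timestep-by-timestep accounting of the at most one additional miss (on $\delta$ versus $\beta$) inside the swap region, in the spirit of the case analysis in \lemref{niceBijection}.

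Finally, I would verify that $\pi$ is naturally surjective in the sense of \defref{nat-surj}: the self-maps contribute an infinite fiber-size-$1$ partition, and the swap construction is an involution on the remaining inputs, producing fiber-size-$2$ partitions that are infinite because any such input may be freely extended by requests that do not involve $\beta$ or $\delta$ while preserving $f$-consistency. This yields $\mb \sleq^f \ma$, which is property~(3). The main difficulty I anticipate is a fully rigorous treatment of the hard case: one must pin down the re-synchronization point in a multicore schedule (where \skedpre and \skedsuf are interleaved across cores), argue that the truncated swap really does preserve $f$-consistency on both sides, and check that the extra time \mb may pay on a miss that \ma hits is absorbed by a symmetric miss of \ma on $\pi(\req)$ forced by \lemref{local-order}.
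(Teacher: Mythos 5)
Your construction of \mb and your handling of the easy case (the full $(\beta,\delta)$-complement of \reqsuf when it preserves $f$-consistency) match the paper's Case~1 exactly. The gap is in the hard case, and it is not a detail you can defer: the ``truncated swap'' does not work, and the paper's proof uses a genuinely different device there. When complementing the suffix breaks locality, \lemref{local-order} forces the page that \mb retains and \ma lacks to be requested in \reqsuf before the page that \ma retains. If no earlier miss re-synchronizes the caches, \mb \emph{hits} on that request while \ma \emph{misses} on it, and in the free-interleaving model that single extra miss delays one core by up to $\tau$ timesteps relative to the others, permanently shifting the interleaving. Your proposed ``timestep-by-timestep accounting of the at most one additional miss'' is a single-core argument; in the multicore setting the desynchronized schedules can diverge arbitrarily downstream, so no per-miss charging closes the bound. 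There is also no re-synchronization point at which truncating the swap helps: the caches only re-align via a miss (in which sub-case the identity map already suffices and no swap is needed) or via fetching the disputed page (at which point the schedules have already diverged), and a swap truncated mid-suffix faces the same locality obstruction at its right boundary that \lemref{local-order} identifies at the prefix/suffix boundary.

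The paper's resolution is the ``inverse input'' construction: it maps \req to a \emph{shorter} input \invpage obtained by deleting enough explicit repetitions of the disputed page that \ma's miss-induced stall on \invpage reproduces, timestep for timestep, the schedule of \mb on \req. This map is only defined when \req contains at least $\tau$ repetitions at the critical point, so it is a non-injective (two-to-one) natural surjection, with the leftover \inps mapped arbitrarily to higher-cost inputs; this is precisely why the lemma is stated for $\sleq^f$ and why the unzipping machinery of \lemref{surj-bij-equiv} is needed at all. Relatedly, your closing claim that the swap is ``an involution \ldots producing fiber-size-$2$ partitions'' is inconsistent: an involution is a bijection with fibers of size one, and if your map were bijective you would not need natural surjective analysis. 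The missing idea, in short, is the comparison of \inps of \emph{different lengths} with identical schedules, which is the central mechanism of the paper's proof.
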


\begin{proofsketch}
  The main insight in this proof is the comparison of \inps with different
  numbers of page requests but the same cost under two different algorithms. If
  an algorithm \ma makes a non-\lru -like decision at some timestep, we
  construct a surjection that maps it to a sequence with the same schedule under
  another algorithm \mb.

  At a high level, we use a ``sequence reordering'' mapping inspired by Lemma 2
  of~\cite{AngeSc09}. Let \mb be an algorithm that matches the evictions of \ma
  until time $t$, when it makes \lru -like evictions. Suppose at time $t$ that
  \ma evicted a page \pnlru and \mb evicted a page \plru. We construct \mb to
  evict the same pages as \ma on the remainder of the sequence.

  We construct a surjective mapping $\pi$ such that for any request
  sequence $\req$, $\mb(\req) \leq \ma(\pi(\req))$.
  There are two main cases based on the continuation of the \inp after time
  $t$. At a high level, if an \inp has locality of reference, then there are not
  many requests to different pages. 
  Now, if possible, we
  swap $\plru, \pnlru$ in the continuation of the \inp after time $t$ since
  these will result in the same cost in the continuation.

  \textbf{Case 1:} Swapping $\plru, \pnlru$ in the continuation maintains
    locality. In this case, $\ma(\req) = \mb(\pi(\req))$ because the different
  decisions at time $t$ did not affect the number of misses (and therefore the
  total time) while serving the the rest of the \inp. Swapping the pages where
  \ma, \mb differ in the continuation of the mapped-to \inp results in the same
  behavior.

  \textbf{Case 2:} Swapping $\plru, \pnlru$ in the continuation does not
  maintain locality. There are a few cases when swapping the two pages would
  disrupt locality.

  \begin{itemize}[leftmargin=*]
  \item If there was a miss on another page before the first request to $\pnlru$
    in the continuation after time $t$, both algorithms would incur the same
    cost since the difference in decision does not affect the number of hits and
    misses in the rest of the \inp. In this case, we set $\pi(\req) = \req$, and
    $\mb(\req) = \ma(\pi(\req))$.
  \item If there was not a miss before the first request to $\pnlru$ after time
    $t$, \mb hits on the first request to $\plru$ in the continuation, and
    we remove requests in $\pi(\req)$ so that the schedule of $\mb$ serving
    $\req$ matches the schedule of $\ma$ serving $\pi(\req)$. Since the
    schedules match, $\mb(\req) = \ma(\pi(\req))$.
  \item The above two cases cover the entire codomain, but not the domain. For
    the remaining \inps, we can map them arbitrarily to \inps of higher cost
    such that there are no more than two \inps in the domain mapped to any \inp
    in the codomain. By construction, $\mb(\req) < \ma(\pi(\req))$. We present
    an example of generating such a mapping from an input \req under algorithms
    \ma and \mb given page $\sigma$ in~\figref{invpage}.
  \end{itemize}\vspace*{-3mm}
\end{proofsketch}

Given any algorithm \ma, we repeatedly apply~\lemref{lru-surjection} to
construct a new algorithm \mb which is \lru -like after some timestep $t$ and is
no worse than \ma.

  Let \nar be the time it takes to serve \inp \req with \ma, and let
  $B_t$ be the class of algorithms that make \lru -like decisions on
  timesteps $n_\req - t$ of every \inp $\req \in \localset$.

   \begin{restatable}{lemma}{lrusteps}
    \label{lem:lru-steps}
    For every algorithm \ma there exists an
    algorithm $\mb_t \in B_t$ such that $\mb_t \sleq \ma$, and for every
    \inp $\req \in \localset$, $\mb_t$ makes the same decisions as
    \ma during the first $\nar - t$ timesteps while serving \req.
  \end{restatable}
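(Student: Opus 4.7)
The plan is to prove the lemma by induction on $t$, using \lemref{lru-surjection} as the engine of the inductive step. Intuitively, that lemma extends the LRU-like region of an algorithm by one timestep backwards while preserving the $\sleq^f$ relationship to the original algorithm; iterating the construction $t$ times starting from \ma yields the desired $\mb_t$.

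For the base case $t = 0$, take $\mb_0 = \ma$: trivially $\mb_0 \in B_0$ (vacuously LRU-like on zero timesteps), the identity mapping is a natural surjection establishing $\mb_0 \sleq \ma$, and $\mb_0$ agrees with \ma on every timestep. For the inductive step, suppose we have built $\mb_{t-1} \in B_{t-1}$ satisfying $\mb_{t-1} \sleq \ma$ and agreeing with \ma on the first $\nar - (t-1)$ timesteps of every $\req \in \localset$. Since $\mb_{t-1}$ is LRU-like on the last $t-1$ timesteps of every input, the hypothesis of \lemref{lru-surjection} is met, with $\mb_{t-1}$ playing the role of \ma in that lemma and $j = \nar - t$ playing the role of $j$. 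Applying \lemref{lru-surjection} produces an algorithm $\mb_t$ that agrees with $\mb_{t-1}$ (hence with \ma) on the first $\nar - t$ timesteps, is additionally LRU-like on the $(\nar - t + 1)$-th timestep, and satisfies $\mb_t \sleq \mb_{t-1}$. Transitivity, which I would argue either by composing the two natural surjective mappings directly or by converting each $\sleq$ to $\bleq$ via \lemref{surj-bij-equiv} and chaining the resulting bijective relationships, then gives $\mb_t \sleq \ma$, completing the inductive step.

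The main obstacle is that $\nar$, and therefore the timestep at which the new LRU-like decision must be enforced, depends on the input \req, whereas \lemref{lru-surjection} takes a single fixed integer $j$ that must apply uniformly across $\localset$. I would address this by partitioning $\localset = \bigsqcup_{N} \localset_N$ according to the value $\nar = N$, applying \lemref{lru-surjection} separately within each $\localset_N$ with $j_N = N - t$, and then combining the per-block natural surjections into a single mapping on all of $\localset$. The naturality condition of \defref{nat-surj}, that each nonempty partition class must be infinite, is preserved by the union because each $\localset_N$ already contains infinitely many sequences of every relevant cost (sequences of arbitrary prefix and suffix can realize any given schedule length under \ma), so each per-block surjection respects naturality and disjoint unions preserve it. A secondary subtlety is that whenever $\mb_{t-1}$ and \ma diverge, the schedule lengths they produce on the same input may differ slightly; this is absorbed by interpreting ``last $t$ timesteps'' relative to \ma's schedule throughout the induction, which is consistent with how the prefixes $\nar - (t-1)$ and $\nar - t$ were defined above.
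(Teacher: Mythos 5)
Your overall strategy---induction on $t$ with \lemref{lru-surjection} as the engine of the inductive step---is the same as the paper's, but there is a genuine gap where you conclude that the output of \lemref{lru-surjection} lies in $B_t$. Membership in $B_t$ requires \lru -like decisions on the last $t$ timesteps, whereas \lemref{lru-surjection} only guarantees that the algorithm it produces agrees with its input algorithm on the initial prefix and is \lru -like at the \emph{single} newly repaired timestep; it says nothing about the $t-1$ timesteps after that. This is not a technicality: the construction inside \lemref{lru-surjection} perturbs the tag of \pnlru and then runs tag-based with respect to the modified tags, which is not the same as being \lru -like with respect to the true request history, so the \lru -likeness of $\mb_{t-1}$ on its final $t-1$ timesteps is not automatically inherited by the new algorithm. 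The paper flags exactly this issue (``Note that \mb does not necessarily make \lru -like decisions for requests after $\nar - t + 1$'') and repairs it by invoking the induction hypothesis a \emph{second} time, now with the freshly constructed $\mb$ playing the role of \ma, to obtain $\mb'_t \in B_t$ that agrees with $\mb$ through the repaired timestep; since $\mb$ is \lru -like there, $\mb'_t \in B_{t+1}$, and transitivity of $\sleq$ chains $\mb'_t \sleq \mb \sleq \mb_t \sleq \ma$. Your proof needs this second invocation (or some independent argument that \lru -likeness on the tail survives the tag perturbation); as written, the inductive invariant $\mb_t \in B_t$ is asserted but not maintained.

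Two smaller remarks. Your observation that $\nar$ is input-dependent while \lemref{lru-surjection} is stated for a fixed $j$, and your fix of partitioning \localset by schedule length before combining the per-block natural surjections, is a legitimate point of care that the paper itself glosses over. The transitivity argument for $\sleq$ is fine and matches what the paper uses implicitly.
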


  For every \lm algorithm \ma,~\lemref{lru-steps} guarantees the existence of an
  algorithm $\mb$ that makes \lru -like decisions on all timesteps for any \inp
  in \localset and is no worse than \ma. The only algorithm with this property
  is exactly \lru.

\begin{theorem}
  \label{lem:lru-sep-surj}
  For any \lm \paging algorithm \ma, $\lru \sleq^f \ma$.
\end{theorem}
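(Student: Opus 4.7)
The plan is to deduce the theorem directly from \lemref{lru-steps}, which guarantees that for any \lm algorithm \ma and any integer $t \geq 1$, there exists an algorithm $\mb_t$ and a natural surjective mapping $\pi_t$ certifying $\mb_t \sleq^f \ma$, where $\mb_t$ agrees with \ma on the first $n_\req - t$ timesteps of every input $\req \in \localset$ and makes \lru-like evictions on the last $t$ timesteps. The key observation is that for any fixed $\req \in \localset$ with finite $n_\req$, once $t \geq n_\req$ the algorithm $\mb_t$ is \lru-like at every timestep of $\req$; since \lru is uniquely determined by its eviction policy, this forces $\mb_t(\req) = \lru(\req)$, and consequently $\lru(\req) \leq \ma(\pi_t(\req))$ for all $t \geq n_\req$.

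It then remains to assemble the family $\{\pi_t\}_{t \geq 1}$ into a single natural surjection $\pi$ witnessing $\lru \sleq^f \ma$. The natural candidate is $\pi(\req) := \pi_{t(\req)}(\req)$ for some choice $t(\req) \geq n_\req$, which immediately gives $\lru(\req) \leq \ma(\pi(\req))$ for every $\req \in \localset$. The main obstacle will be verifying that $\pi$ is itself a natural surjection, i.e., that for each $m$ the set of image elements with exactly $m$ preimages under $\pi$ is either empty or infinite. I would build $\pi$ by layered induction on the input length $n_\req$, processing at stage $n$ the finitely many new inputs $\req$ with $n_\req = n$; at each stage, the freedom to pick any $t(\req) \geq n_\req$ provides an infinite pool of admissible witnesses, while the \ccp of total time limits the number of preimage conflicts at any cost level. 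This flexibility allows the image assignments to be chosen so as to preserve naturality while maintaining $\lru(\req) \leq \ma(\pi(\req))$, completing the construction and hence the proof.
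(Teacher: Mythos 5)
Your proposal takes essentially the same route as the paper: the paper also derives this theorem directly from \lemref{lru-steps}, observing that an algorithm which makes \lru-like decisions at every timestep of every input in \localset must be \lru itself, and concluding via transitivity of the $\sleq^f$ relation along the chain of intermediate algorithms. Your explicit diagonalization over $t$ --- choosing $t(\req) \geq n_\req$ per input and patching the surjections $\pi_t$ into a single natural surjection --- spells out a limit step that the paper's two-sentence argument leaves implicit, but it does not change the underlying approach.
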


We have defined a surjection from $\lru$ to any other algorithm through
intermediate algorithms that are progressively ``closer to \lru''. Therefore, we
have shown that \lru is the best \lm algorithm under \shortnewba via surjective
analysis and therefore under \shortnewba by combining
Theorem~\ref{lem:lru-sep-surj} and \lemref{surj-bij-equiv}.

\begin{theorem}
  \label{thm:lru-sep}
  For any lazy algorithm \ma, $\lru \bless^f \mb$.
\end{theorem}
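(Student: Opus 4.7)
The plan is to combine Theorem~\ref{lem:lru-sep-surj} with the ``unzipping'' equivalence of Lemma~\ref{lem:surj-bij-equiv} to obtain the non-strict inequality $\lru \bleq^f \ma$, and then to upgrade it to the strict separation by a pigeonhole argument in the spirit of the contradiction lemma for the \ccp. Concretely, Theorem~\ref{lem:lru-sep-surj} already delivers a natural surjective mapping witnessing $\lru \sleq^f \ma$; since the total-time cost model for \mcp satisfies the \ccp, Lemma~\ref{lem:surj-bij-equiv} unzips this surjection into a genuine bijection $\pi : \localset \to \localset$ with $\lru(\req) \leq \ma(\pi(\req))$ for every $\req \in \localset$. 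That gives one half of $\bless^f$ immediately.

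For strictness, I would produce an \inp on which \lru beats \ma and then turn a single such witness into an infinite family to defeat any reverse bijection. Since \ma is lazy but not identical to \lru, it must disagree with \lru at some step on some $\req \in \localset$; following the construction used in Lemma~\ref{lem:lru-surjection}, I would fix such a disagreement, have the offending core request, a few steps later, the page that \ma evicted and \lru retained, and choreograph the other cores so that no concurrent fetch can mask the extra miss. This produces a seed $\req^{\star} \in \localset$ with $\lru(\req^{\star}) < \ma(\req^{\star})$. Padding $\req^{\star}$ with arbitrary $f$-consistent fillers (e.g., repeated requests to pages already in the cache on the other cores) yields an infinite family $\{\req^{\star}_n\}_{n \in \mathbb{N}} \subseteq \localset$ that preserves the cost gap while ranging over arbitrarily many costs.

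Combined with the \ccp, this family forces $|\{\req \in \localset : \lru(\req) \leq c\}| > |\{\req \in \localset : \ma(\req) \leq c\}|$ at the smallest cost $c$ where the two cost multisets diverge. Any putative bijection witnessing $\ma \bleq^f \lru$ would have to inject the strictly larger \ma-low-cost set into the smaller \lru-low-cost set, which is impossible; hence $\ma \bnleq^f \lru$, and $\lru \bless^f \ma$ follows. The main obstacle I expect is engineering the seed witness within the locality constraint: the fetch delay $\tau$ can potentially hide \ma's extra miss if another core happens to be fetching the same page at the critical moment. I would therefore schedule the neighbouring cores on disjoint pages during the critical window, and verify that this short window still satisfies the concavity bound of $f$, which should be routine since $f(1) = p$ permits fresh distinct requests at every step of a sufficiently short prefix.
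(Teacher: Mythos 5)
Your first step---obtaining $\lru \bleq^f \ma$ by feeding the natural surjection of Theorem~\ref{lem:lru-sep-surj} into the unzipping lemma (\lemref{surj-bij-equiv})---is exactly the paper's proof of \thmref{lru-sep}. Note, however, that \lemref{surj-bij-equiv} as stated already concludes the \emph{strict} relation $\bless$ from a \surj surjective (hence non-injective) mapping together with the \ccp, so the paper stops there. You instead treat the unzipping step as yielding only $\bleq^f$ and try to supply strictness separately from a pointwise witness, and that separate argument has a genuine gap.

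The gap is this: a seed $\req^{\star} \in \localset$ with $\lru(\req^{\star}) < \ma(\req^{\star})$---even an infinite padded family of such \inps---does not defeat a reverse bijection. A bijection $\phi$ witnessing $\ma \bleq^f \lru$ is under no obligation to fix $\req^{\star}$; it may send $\req^{\star}$ to any other \inp whose \lru-cost is at least $\ma(\req^{\star})$. (Toy version: if \lru costs $(1,2)$ on \inps $(A,B)$ while \ma costs $(2,1)$, then $A$ is a pointwise witness for \lru, yet $\ma \bequiv \lru$ via the swap $A \leftrightarrow B$.) What must actually be shown is that the cost \emph{multisets} diverge, i.e.\ that $|\{\req : \lru(\req)\le c\}| > |\{\req : \ma(\req)\le c\}|$ for some threshold $c$; your sentence asserting this inequality ``at the smallest cost $c$ where the two cost multisets diverge'' presupposes that they diverge at all, which your witness family does not establish, since \ma could in principle compensate on other \inps and leave the multisets identical. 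The divergence is precisely what the non-injectivity of the natural surjection buys: Case 4 of the mapping in \lemref{lru-surjection} sends infinitely many additional \inps, two-to-one, to \inps of strictly larger cost, and it is this two-to-one structure on an infinite partition, combined with the \ccp and the counting argument of the paper's contradiction lemma, that turns $\sleq^f$ into $\bless^f$ rather than merely $\bleq^f$. If you want a self-contained strictness argument, extract that counting step from the surjection itself rather than from a hand-built witness.
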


We take the first steps beyond worst-case analysis for \mcp with the separation
of \lru from all other \lazy algorithms on \inps with locality via
\shortnewba. The main insight in the proof is to compare \inps of different
lengths (in terms of the number of page requests) but the same schedule with a
surjective mapping and then to convert the mapping into a bijection. Although we
used it the case of \mcp, \fullnewba is a general analysis technique that may
be applied to other online problems.


\section{Related \mcp models}\label{sec:related}
We review alternative models for \mcp in order to explain why we use the
free-interleaving model. Specifically, we discuss a class of models for \mc
\paging called fixed interleaving and the \hass model introduced by
Hassidim~\cite{Hass10}. At a high level, these models assume the order in which
the requests are served is decided by the adversary.  In practice, however, the
schedule of an algorithm is implicitly defined through the eviction
strategies~\cite{LopezSa12a,LopezSa12b}, so the free-interleaving model studied
in this paper is more practical.

Existing work focuses on minimizing either the makespan of \paging strategies or
on minimizing the number of misses.  In the case of single-core \paging,
minimizing the makespan and number of misses are equivalent as makespan is
simply $\tau$ times the number of misses.  For \mcp, however, there is no such
direct relationship between makespan and number of misses. In this paper, we
introduce the total time, a cost measure with benefits over both makespan and
number of misses while capturing aspects of each.

Feuerstein and Strejilevich de Loma~\cite{Stre98,FeuerSt02} introduced
multi-threaded \paging as the problem of determining an optimal schedule in terms
of the optimal interleaved request sequence from a set of individual request
sequences from multiple cores. More precisely, given $p$ request sequences
$\req_1, \ldots, \req_p$, they study miss and makespan minimization for a
``flattened'' interleaving of all $\req_i$'s.  Our work focuses on algorithms
for page replacement rather than ordering (scheduling) of the input
sequences. As mentioned, in practice, the schedule of page requests is embedded
in the page-replacement algorithm.

Several previous works~\cite{BarveGrVi00,CaoFeLi94,KattiRa12} studied \mcp in
the \defn{fixed-interleaving model} (named by Katti and
Ramachandran~\cite{KattiRa12}). This model assumes each core has full knowledge
of its future request sequence where the offline algorithm has knowledge of the
interleaving of requests. The interleaving of requests among cores is the same
for all \paging algorithms and potentially adversarial (for competitive
analysis).  Katti and Ramachandran~\cite{KattiRa12} gave lower bounds and a
competitive algorithm for fixed interleaving with cores that have full knowledge
of their individual request sequences.
In practice, cores 
do not have any knowledge about future requests, and do not necessarily serve
requests at the same rate. Instead, they serve requests at different rates
depending on whether they need to fetch pages to the cache.




Hassidim~\cite{Hass10} introduced a model for \mcp before free interleaving
which we call the \defn{\hass model} that allows offline algorithms to define an
explicit schedule (ordering of requests) for the online algorithm. Given an
explicit schedule, the online algorithm serves an interleaved sequence in the
same way that a single-core algorithm does. The cost of the algorithm, measured
in terms of makespan, is then compared against the cost of an optimal offline
algorithm (which potentially serves the input using another schedule).

Both \hass and free-interleaving models include a fetch delay upon a miss, but
\hass gives offline algorithms more power by allowing them to arbitrarily delay
the start of sequences at no cost in terms of the number of misses (Theorem 3.1
of~\cite{Hass10}). While \hass provides useful insight about serving multiple
request sequences simultaneously, it leads to overly pessimistic results when
minimizing the number of misses as 
it gives offline algorithms an unfair advantage.

Finally, competitive analysis for distributed systems illustrates the difficulty
of multiple independent processes. For example, system nondeterminism in
distributed algorithms~\cite{Aspnes98} addresses nondeterminism in the system as
well as in the input. Furthermore, recent work~\cite{BoyarElLa20} confirms the
difficulty that online algorithms face in ``scheduling'' multiple inputs in the
distributed setting.





\section{Conclusions}\label{sec:conclusion}

We take the first steps beyond worst-case analysis of \mcp in this
paper. In~\thmref{lru-sep}, we separated \lru from other algorithms on sequences
with locality of reference. More generally, we introduced \fullnewba and
demonstrated its flexibility in the direct comparison of online algorithms.  We
expect \fullnewba to be useful in the study of other online problems, and leave
such application as future work.

We conclude by explaining that we are optimistic about \mcp. \MCp is an
important problem in online algorithms and motivated by computer architectures
with hierarchical memory.  Practitioners have extensively studied
cache-replacement policies for multiple cores. The need for theoretical
understanding of \mcp will only grow as \mc architectures become more
prevalent.



\section*{Acknowledgments}
Research was sponsored by the United States Air Force Research Laboratory and
was accomplished under Cooperative Agreement Number FA8750-19-2-1000. The views
and conclusions contained in this document are those of the authors and should
not be interpreted as representing the official policies, either expressed or
implied, of the United States Air Force or the U.S. Government. The
U.S. Government is authorized to reproduce and distribute reprints for
Government purposes notwithstanding any copyright notation herein.

\clearpage
 \bibliographystyle{siam}
 \bibliography{extended-paging}

 \clearpage
\appendix
\section{Equivalence of \lm algorithms (from ~\secref{bijective})}\label{app:bijective}


\begin{prop}
  \label{propo:lazybijective}
  If \ma and \mb are two arbitrary \lm algorithms, $\ma \bequiv \mb$.
\end{prop}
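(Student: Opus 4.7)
The plan is to construct a bijection $\pi : \univ \to \univ$ satisfying $\ma(\pi(\req)) = \mb(\req)$ for every input $\req$. An equality bijection immediately gives $\ma \bleq \mb$, and repeating the construction with the roles of $\ma$ and $\mb$ swapped yields $\mb \bleq \ma$, so $\ma \bequiv \mb$. The $\bless$-style argument of \lemref{niceBijection} is not quite right here because we want equality, not strict improvement; the symmetry between two lazy policies suggests a different mechanism based on page relabeling rather than comparing cache sizes.

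The construction builds $\pi$ online by a time-varying page relabeling that keeps the two algorithms' caches synchronized. Specifically, we maintain a permutation $\phi_t : U \to U$ of the page universe with the invariant that at the start of timestep $t$, the cache of $\ma$ while serving $\pi(\req)$ equals $\phi_t$ applied to the cache $C_\mb$ of $\mb$ while serving $\req$; the $\pi$-image is defined by setting each request issued at time $t$ by some core to $\phi_t$ of the corresponding request in $\req$. Starting from $\phi_0 = \mathrm{id}$, the invariant forces $\mb$ and $\ma$ to see matching hit/miss patterns at every timestep on every core: when $\mb$ hits, so does $\ma$, and when $\mb$ misses on a non-full cache, both caches simply grow by one page and $\phi_{t+1} = \phi_t$ suffices. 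The only substantive case is a miss on a full cache, where $\mb$ evicts some $y \in C_\mb$ by its lazy rule and $\ma$ evicts some $x \in \phi_t(C_\mb)$ by its lazy rule. If $x = \phi_t(y)$ we take $\phi_{t+1} = \phi_t$; otherwise we set $\phi_{t+1} := \phi_t \circ (y\; \phi_t^{-1}(x))$, composing with the transposition of the two disagreeing labels. A direct check shows the invariant is restored after the eviction. Because both algorithms are lazy and the hit/miss patterns match at every timestep, the induced schedules, including fetch delays and any savings from concurrent shared fetches, coincide, and hence $\ma(\pi(\req)) = \mb(\req)$.

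The map $\pi$ is a bijection because its inverse is also definable online: given a target $\req^* \in \univ$, simulating $\ma$ on $\req^*$ reconstructs the permutations $\phi_t$ inductively (the reconstructed preimage so far determines $\mb$'s eviction decisions), and then each request issued at time $t$ in the preimage is recovered as $\phi_t^{-1}$ applied to the corresponding request in $\req^*$. Running the entire argument with the roles of $\ma, \mb$ interchanged produces a bijection witnessing $\mb \bleq \ma$, completing the proof of $\ma \bequiv \mb$.

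The main technical hurdle is verifying that the single transposition update is well-defined and preserves the invariant: one must check that $z := \phi_t^{-1}(x)$ lies in $C_\mb$ and is distinct from $y$ (both following from $x \in \phi_t(C_\mb)$ together with $x \neq \phi_t(y)$), and that the newly requested page is fixed by the transposition (since it lies outside $C_\mb$ while $y, z \in C_\mb$). After this routine bookkeeping, equality of total times follows directly from the equality of hit/miss schedules and the definition of total time as the sum over cores of timesteps to serve each request sequence.
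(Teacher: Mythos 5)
Your proposal is correct and is essentially the paper's own argument in a more explicit form: the paper also proceeds by synchronizing the two lazy algorithms' caches step by step so that next-hit requests map to next-hit requests, next-miss to next-miss, and in-flight fetches to in-flight fetches, which is exactly the correspondence your permutation $\phi_t$ realizes (the paper phrases it as inductively extending a bijection over continuations of inputs of the same length vector rather than as a page relabeling). The only bookkeeping detail to add is that up to $p$ cores may evict in the same timestep, so $\phi_{t+1}$ may need to compose several transpositions applied in the processor's fixed per-timestep core order, which your invariant handles without change.
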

\begin{proof}
  The proof is an extension of the proof of Theorem 3.3 from~\cite{AngeDoLo08}.
  Let $(n^t_1, \ldots, n^t_p)$ be the indices of $\req_1, \ldots, \req_p$ being
  served at time $t$ by $\ma, \mb$. Let $n^t = \sum_{i=1}^p n^t_i$ be the number
  of requests served up until time $t$.

  We prove by induction on time that for every $t \geq 1$ that there is a
  bijection $b^t: \seqsetnt \leftrightarrow \seqsetnt$ such that
  $\ma(\req) = \mb(b^t(\req))$ for each $\req \in \seqsetnt$. For
  $t \leq k\tau/p$, $\ma(\req) = \mb(b^t(\req))$ trivially because \ma and \mb
  can only bring in up to $k$ pages, so \ma and \mb behave the same and incur
  the same cost.  Assume that for all $n^t \leq h$ where $h \geq k / p$, we can
  define a bijection $b^t: \req(n^t_1, \ldots, n^t_p)$ showing \ma and \mb are
  equivalent, where $n^t_i$ is the number of requests up to time $t$ of
  core $\core_i$.  We now show how to extend this bijection for $n = h + 1$. We
  define a new bijection \newline
  $b^{h+1} : \seqset(n^{h+1}_1, \ldots, n^{h+1}_p) \leftrightarrow
  \seqset(n^{h+1}_1, \ldots, n^{h+1}_p)$, which maps the continuations of each
  request sequence $\req_i$ to the continuations of $b^h(\req_i)$ in the image.
  By assumption, up to time $h$ we have defined a bijection $b^h$ that matches
  sequences for \ma, \mb in terms of cost and schedule. That is, the number of
  pages $k' < k$ being fetched at time $t$ after serving \req by \ma is the same
  as the number of pages being fetched at time $t$ after serving $b^h(\req)$ by
  \mb.

  Let $|P| = N$ be the number of distinct pages that any algorithm can request.

  Since there are i) $k-k'$ possible next-hit requests in both \ma and \mb at
  time $h+1$ and ii) the same number of cores not currently fetching in
  $\ma, \mb$ at time $h+1$, we can arbitrarily biject these to each other in
  each $\req_i$. We also do the same for the $N-k$ next-miss requests outside
  the cache and the misses on the $k'$ requests being fetched for each
  $\req_i$. \ma and \mb incur the same cost in each mapping and maintain the
  same schedule, $\ma \equiv_b \mb$.
\end{proof}

\clearpage
\section{Proofs for Lemmas in~\secref{surjective}}\label{app:surj-lemmas}

\subsection{Formalizing \lru in the \mc setting}\label{sec:formal-lru}

In order to compare algorithms with \lru, we compare the state of the cache and
the timestamps assigned to pages in the cache throughout the execution of
different algorithms. At each timestep, \lru assigns integer
\defn{tags}~\cite{AngeSc09} to each page in its cache to represent when they
were most-recently accessed.

In general, an algorithm \ma is \defn{tag-based} if it uses tags to keep track
of when pages were last accessed. Given an algorithm $\ma$ that uses tags, we
denote the tag of some page $\sigma$ in the cache at time $t$ with
$\text{tag}_\ma[\sigma, \mathcal{S}_t]$, where $\mathcal{S}_t$ is the schedule
of the \inp up to time $t$.

Since will be comparing \lru with arbitrary algorithms via surjective analysis,
we will formalize \defn{tag-based \lru}~\cite{AngeSc09} in a shared cache.
Tag-based \lru in the \mc setting is a straightforward extension of its
definition in the single-core setting.

\begin{definition}[Tag-based \lru (\cite{AngeSc09})]
  \label{def:tag-based}
  \defn{Tag-based} \lru assigns a set $T$ of (integer) \emph{tags} to each page
  in its cache to represent when they were most-recently accessed. Formally, for
  every page $\sigma$ in the cache, let $\text{tag}_\lru[\sigma, \skedar[t]]$ be
  the tag assigned to $\sigma$ right after \lru has served requests up to
  timestep $t$. Tag-based \lru processes each
  request $\sigma$ at each timestep $\ell > t$ as follows:
  \begin{enumerate}
  \item If $\sigma$ is a hit, \lru updates the tag of
    $\text{tag}_\lru[\sigma, \skedar[\ell]] = \ell$.
 \item If $\sigma$ is a miss and not currently being fetched by another core,
   \lru will evict the page with the smallest tag (if the cache is full) and
   fetch $\sigma$ to the cache while updating its tag for the next $\tau$
   timesteps as it is fetched.
 \item If $\sigma$ is a miss and currently being fetched by another core, \lru
   will not evict a page (since the eviction due to $\sigma$ already happened)
   and the core that requested $\sigma$ will stall for $x$ steps until $\sigma$
   is brought to the cache.
 \end{enumerate}
 \end{definition}

\subsection{Proofs of Lemmas}
\localorder*

\begin{proof}
Since $\req' = \reqpre\overline{\reqsuf}$ is not consistent with $f$, there must
exist indices $j_{1,1}, j_{1, 2}, \ldots, j_{p, 1}, j_{p, 2}$ such that for all
$i = 1, \ldots, p$, $j_{i,1} < j_{i, 2} \leq n_i$ such that the number of
distinct requests over all $\req_i[j_{i,1}, j_{i,2}]$ exceeds $f(j_2 - j_1 + 1)$
distinct pages. For any subsequence $r$ in $\reqsuf$, $\overline{\req}$ has the
same number of distinct pages as $r$. Therefore, at least one of
$j_{i,1}, j_{i, 2}$ must be such that $j_{i, 1} \leq t_{i, j} \leq j_{i, 2}$
(where $t_{i,j}$ is the index of some $\req_i$ at time $j$ under \ma.

Suffices then to argue that for at least one $i = 1, \ldots, p$,
$\req_i[t_{i,j}, j_{i,2}]$ contains a request to $\beta$ but not to
$\delta$. For simplicity, we will specify a subsequence of one $\req_i$ to mean
over all $i = 1, \ldots, p$.

It is easy to see that $\req_i[t_{i,j}, j_{i,2}]$ cannot contain requests to
both $\beta$ and $\delta$, nor can it contain requests to none of these pages:
if either of these cases occurred, then $\req_i[j_{i,1}, j_{i,2}]$ and
$\req_i[j_{i, 1}, t_{i, j}]\overline{\req_i[t_{i, j}, j_{i, 2}]}$ would contain
the same number of distinct pages, which contradicts that $\req$ is consistent
with $f$. Note that $\req_i[j_{i, 1}, t_{i, j}]$ contains a request to $\beta$
but not to $\delta$.

Now \reqpre$\overline{\reqsuf}$ must contain a request that does not appear in
$\req_i[j_{i,1}, j_{i,2}]$ and $\delta$ is the only option. Therefore, \reqsuf
contains $\beta$ but not $\delta$.
\end{proof}

We advise the reader to first focus on the structure of the proof
of~\lemref{lru-surjection} by skipping the proofs of the propositions, and then
revisiting the details afterwards in~\appref{surj-propos}.

\lrusurj*

\begin{proof}
  First, we construct \mb using \ma on an \inp $\req \in \localset$.
  At a high level, \mb matches \ma 's eviction decisions up to time $j$, makes
  an \lru -like decision at time $j+1$, and matches \ma in the remainder of the
  \inp.  First, we require \mb to make the same decisions as \ma on all requests
  in \reqpre. If \ma makes \lru -like decisions on all misses at time $j+1$,
  then \mb makes the same \lru -like decision as \ma, as well as the same
  decisions on all \reqsuf as \ma.

  If \ma makes a non-\lru -like decision at time $j+1$, however, there must
  exist a pair of pages $\plru, \pnlru \in P$ where $\plru \neq \pnlru$ such
  that at timestep $j+1$, \ma evicts \pnlru from its cache, whereas \plru is the
  least-recently-used page in \reqpre (for now we assume that \ma, \mb differ by
  only one page. The mapping in this lemma can be repeated for multiple pages,
  however.) If there are multiple non-\lru -like decisions at time $j+1$, we can
  apply the same sequence-mapping technique for all of them.

  We require that \mb evicts \plru in the remainder of the \inp if there is a
  miss. The tag of all other pages besides \pnlru is defined by the last time
  there were accessed, and the tag of \pnlru is the last time \plru was
  accessed.  More formally,
  $\text{tag}_\mb[\pnlru, \skedpre \cdot s^\ma] \gets \text{last}[\plru,
  \skedpre]$, and
  $\text{tag}_\mb[\sigma, \skedpre \cdot s^\ma] \gets \text{last}[\sigma,
  \skedpre]$ for all pages $\sigma \neq \pnlru$ in \mb's cache after time
  $j+1$. We use $\text{last}[\plru, \skedpre]$ to denote the \emph{time} of the
  last access to \plru in \skedpre. After time $j+1$, we require that \mb is
  tag-based. Note that \mb is completely online because it does not know the
  future.

  The two algorithms differ in only one eviction: \mb evicts \plru instead of
  \pnlru (makes an \lru -like decision) and demotes the timestamp of \pnlru so
  that \pnlru is the least-recently-used page as \mb prepares to serve the
  suffix \reqsuf.

  By construction, \mb satisfies properties (1) and (2) of the lemma. In the
  rest of the proof, we will show property (3). Let $\skedloc$ be the set of
  schedules resulting from serving \inps with locality \localset with $\ma$.

  We now define a mapping between \inps served by algorithms that differ on
  one eviction such that the two \inps have the same schedule.

  \begin{definition}[Inverse \inp on one page]
    \label{def:inv-seq}
    Let $\sigma$ be a page that algorithm $\mb$ hits on and $\ma$ misses on (for
    the \emph{first time} after time $j+1$) at time $t > j+1$. Also, suppose
    that \req is an \inp with at least $\tau$ repetitions of $\sigma$
    starting at time $t$ under \mb.  We define the \defn{inverse of \req in \mb
      under \ma w.r.t.\ $\sigma$}, \invpage, as as follows: \invpage under \ma
    generates the same schedule as \req under \mb. Informally, \invpage removes
    all repetitions due to misses the first time $\sigma$ is fetched after time
    $j+1$.

    Let \req be an \inp such that at least one core $\core_i$
    requests $\sigma$ at least $\tau$ times starting at timestep $t$ when served
    by \mb. Formally, let $\core_i$ request $\sigma$ $\tau + a_i$ times starting
    at time $t$ under $\mb$, at index $x_i$ through $x_i + \tau + a_i$ in
    $\req_i$. In \invpage, we map those requests to a ``shorter'' \inp of
    repetitions: starting at index $x_i$ in $\req_i$, \invpage only has
    $a_i + 1$ requests to $\sigma$. Furthermore, suppose any other core
    $\core_j \neq \core_i$ repeats requests to $\sigma$ at least $b_j + a_j$
    times starting at some timestep $t + \tau - b_j$ (for $0 < b_j \leq \tau$)
    and that they begin at index $x_j$. We map those requests to $a_i + 1$
    repetitions of $\tau$ in \invpage. Note that for all $i = 1, \ldots, p$,
    $a_i \geq 0$.

    The inverse \invpage is only defined for \inps that have at
    least $\tau$ repetitions of $\sigma$ at time $t$ under \mb.
    Let \invpageset be the set
    of \inps with locality where the inverse is defined for \ma.
  \end{definition}

  We present an example of generating \invpage from \req under \ma and \mb given
  page $\sigma$ in ~\figref{invpage}. In the example, we ``shorten'' the
  repetitions in \invpage such that \ma serving \invpage generates the same
  schedule as \mb serving \req. In \invpage, $p_1$ requests $\sigma$ 3 times
  ($a_1 + 1$) and $p_2$ requests $\sigma$ 8 times ($a_2 + 1$).

  \begin{prop}
  \label{propo:add-reqs-local}
  Let $f$ be an increasing concave function and $\ma$ be any \paging
  algorithm. If an \inp $\req$ is consistent with $f$, an \inp $\req'$
  based on \req that repeats any of its requests $\sigma$ (immediately after
  $\sigma$) is also consistent with $f$.
  \end{prop}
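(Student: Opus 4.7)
My plan is a direct case analysis on how an arbitrary window of $\req'$ sits relative to the inserted duplicate. Suppose the duplicate lies in the sequence of core $\core_i$, immediately after an existing request to $\sigma$ at position $k$, so that $\req'_i[k] = \req'_i[k+1] = \sigma$ and $\req'_i[\ell] = \req_i[\ell-1]$ for $\ell \geq k+2$, while every other core's sequence coincides with the one in $\req$. I will take an arbitrary window $W'$ of $\req'$ of size $w$, with starting positions $(s'_1, \ldots, s'_p)$ (one per core), and exhibit a size-$w$ window $W$ of $\req$ whose set of distinct pages contains that of $W'$. Since $\req$ is consistent with $f$, this immediately yields the desired bound $f(w)$ on the number of distinct pages in $W'$.

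For every core $\core_j$ with $j \neq i$ the restriction of $W'$ to that core is literally $\req_j[s'_j, s'_j + w - 1]$, so the insertion only affects the $\core_i$-restriction. I plan to split into three subcases according to whether this $\core_i$-restriction sits entirely before the inserted duplicate ($s'_i + w - 1 \leq k$), entirely after it ($s'_i \geq k+2$), or straddles it ($s'_i \leq k < s'_i + w - 1$). The first two subcases are immediate: the $\core_i$-restriction of $W'$ equals a length-$w$ contiguous block of $\req_i$ (shifted by one position in the second subcase), and taking $W$ to use the correspondingly shifted starting position on core $\core_i$ reproduces the same requests and hence the same set of distinct pages.

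The straddling subcase is the only one requiring an actual observation. Here both copies of $\sigma$ lie in the $\core_i$-restriction of $W'$ at adjacent positions $k, k+1$, so deleting the inserted copy yields the length-$(w-1)$ block $\req_i[s'_i, s'_i + w - 2]$ of $\req_i$, which has exactly the same distinct-page set as the $\core_i$-restriction of $W'$ since the original request to $\sigma$ at position $k$ is still present. Because this is a sub-block of the length-$w$ block $\req_i[s'_i, s'_i + w - 1]$, choosing $W$ with starting positions $(s'_1, \ldots, s'_p)$ in $\req$ gives a size-$w$ window of $\req$ whose distinct-page set contains that of $W'$, and the argument is complete. The only subtlety is carefully tracking the one-position index shift past the insertion point; notably, the concavity of $f$ is not needed for this proposition, only the hypothesis that $\req$ itself is consistent with $f$.
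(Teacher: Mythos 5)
Your argument is correct and is essentially a careful expansion of the paper's one-sentence proof, which simply observes that repeating a request cannot increase the number of distinct pages in any window; your window-by-window case analysis supplies the index bookkeeping (in particular the straddling case, where a size-$w$ window of $\req'$ corresponds to only $w-1$ positions of $\req$) that the paper glosses over. One minor slip: your three subcases omit the window with $s'_i = k+1$, which starts exactly at the inserted copy, but that window is literally identical to the block $\req_i[k, k+w-1]$ of $\req$, so the conclusion is unaffected.
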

  The only difference between $\req'$ and \req is that $\req'$ may have some
  repeated requests. Repeating requests does not increase the number of distinct
  pages in each window, so $\req'$ must also be consistent with $f$.

  Note that even if an \inp \req has locality of reference and has at least
  $\tau$ repetitions of $\sigma$ at time $t$ under \mb, \invpage may not have
  locality of reference as it removes duplicates. Every local \inp that
  misses on $\sigma$ at time $t$ under \ma has a corresponding \inp with
  repetitions to replicate \ma's schedule under \mb, however, because creating
  the same schedule in \mb requires only adding repetitions, which maintain
  locality (\proporef{add-reqs-local}).

  We use surjective analysis via case analysis of the space of request
  \inps with locality as follows:
{\tiny
  \begin{numcases}{\pi(\req) =}
    \fullcompsuf & if $\reqpre r_{j+1}
    \compsuf$ is consistent with $f$ \notag \\
    & and \notag \ma does not make an \lru -like \notag \\
    & decision on $\reqpre r_{j+1}$. \label{case1}\\
    \req
    & $\reqpre r_{j+1} \compsuf$ is not consistent with $f$,
    \notag \\
    & and \mb incurs a miss before the first\notag \\
    & request to \plru in \reqsuf. \label{case2} \\
    \invpage & $\reqpre r_{j+1} \compsuf$ is not consistent with $f$,
    \notag \\
    & \mb does not incur a miss before the \notag \\
    & first request to \plru in \reqsuf,
    \notag \\
    & and $\req \in  \invpagesetnlru$. \label{case3} \\
    \req' &
    $\reqpre r_{j+1} \compsuf$ is not consistent with $f$,
    \notag \\
    & \mb does not incur a miss before the \notag \\
    & first request to \plru in \reqsuf,
    \notag \\
    & and $\req \notin  \invpagesetnlru$ \label{case4}
  \end{numcases}
  }
  where \compsuf denotes the complement of \reqsuf with respect to \plru and
  \pnlru ($\overline{\reqsuf}^{(\plru, \pnlru)}$). Additionally, $\req'$ is
  another \inp such that \ma serving $\req'$ has a greater total time than
  $\mb$ serving $\req$ (i.e.\ $\mb(\req) < \ma(\req')$).



  First, we show that $\pi(\req)$ accounts for all $\req \in \localset$.

  \begin{restatable}{prop}{surjmap}
    \label{lem:surj-map}
    The function $\pi(\req): \localset \leftrightarrow \localset$ is surjective
    and non-injective.
  \end{restatable}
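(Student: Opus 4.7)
The plan is to establish surjectivity by a case analysis matching the four cases in the definition of $\pi$, and to derive non-injectivity from the many-to-one nature of Case~3. First I would partition $\localset$, viewed as the codomain, into four subsets corresponding to the four cases in which an element could appear as an output, and show that each subset lies in the image of the appropriate case.

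Case~1 is an involution on its domain: swapping \plru and \pnlru in the suffix is self-inverse, and the preconditions (consistency of $\reqpre r_{j+1} \compsuf$ with $f$, and \ma making a non-\lru-like decision at $r_{j+1}$) are symmetric in \plru and \pnlru, so applying Case~1 twice recovers the original \inp and the image of Case~1 equals its domain. Case~2 is the identity on its domain, which is already a subset of $\localset$. For Case~3, the map \invpage removes the padding repetitions of $\sigma$ from \req; conversely, any \inp in the image admits an infinite family of preimages indexed by the nonnegative integers $a_i$ (one per core) that control how many $\sigma$-repetitions were added. This is precisely where non-injectivity arises: infinitely many \inps in the Case~3 domain collapse to the same shorter output in \invpagesetnlru, so $\pi$ is not injective.

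Case~4 handles the residual \inps in $\localset$ that are not reached by Cases~1--3. For each such \req, the construction injects it into a higher-cost \inp $\req'$ in $\localset$, subject to the constraint that the preimage of any codomain element under $\pi$ has size at most two. Taking the four images together, I would argue that they cover (up to the permitted bounded overlap in Case~4) the entirety of $\localset$, so $\pi$ is surjective. Along the way I would invoke \proporef{add-reqs-local} to guarantee that adding or removing repetitions of a single page preserves consistency with $f$ whenever needed to stay inside $\localset$.

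The main obstacle is the Case~4 construction: we must specify $\req'$ so that (i) $\req'$ remains in $\localset$, (ii) the collection of $\req'$ across all Case~4 inputs exactly covers the portion of $\localset$ not reached by Cases~1--3, and (iii) the resulting multiplicity partitions $X_m, Y_m$ of \defref{nat-surj} are either empty or infinite for every $m$, so that $\pi$ qualifies as \surj surjective (which is what is ultimately needed to feed this mapping into \lemref{surj-bij-equiv}). Naturalness is delicate because Case~1 contributes many multiplicity-one matched pairs, Case~3 contributes infinitely many multiplicity-$\infty$ classes from the $a_i$ choices, and Case~4 adds bounded pairings on top; verifying that none of these interactions produces a finite but nonempty multiplicity class is the subtle point.
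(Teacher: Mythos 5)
Your overall decomposition matches the paper's: Case~1 is an involution (complementing the suffix with respect to \plru and \pnlru is self-inverse, so $\pi(\pi(\req)) = \req$ there), Case~2 is the identity on its domain, Case~3 covers the remaining portion of the codomain via the inverse map, and Case~4 handles the residual domain by mapping onto higher-cost \inps with at most two preimages per target. You also correctly identify that what must ultimately be verified is naturalness of the multiplicity partitions so that the map can be fed into \lemref{surj-bij-equiv}.

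However, there is a genuine error in your non-injectivity argument. You claim non-injectivity arises in Case~3 because ``any \inp in the image admits an infinite family of preimages indexed by the nonnegative integers $a_i$.'' This reverses the construction of \defref{inv-seq}: the $a_i$ are not free parameters but are read off from the image. An input on which core $\core_i$ issues $\tau + a_i$ repetitions of $\sigma$ is sent to an output with exactly $a_i + 1$ repetitions, so the output's repetition count determines $a_i$ and hence the unique preimage; the inverse removes exactly $\tau - 1$ repetitions per core and is one-to-one on its domain. In the paper's argument Cases~1--3 together form a \emph{one-to-one} cover of the entire codomain, and non-injectivity comes entirely from Case~4, which layers the leftover domain (\inps not in \invpagesetnlru) on top of already-covered codomain elements, producing the two-to-one classes. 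This is not mere bookkeeping: if Case~3 really were infinitely-many-to-one as you assert, the map would have nonempty multiplicity classes of infinite index, which are not accommodated by \defref{nat-surj} (naturalness is defined only for finite $m \in \mathbb{N}$), and the unzipping conversion to a bijection in \lemref{surj-bij-equiv} would break down. You half-notice this tension in your final paragraph but leave it unresolved; the resolution is that Case~3 is injective and only Case~4 contributes multiplicity two, so $X_1, Y_1, X_2, Y_2$ are all infinite and every other partition is empty.
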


  \begin{proofsketch}
    Cases 1-3 of $\pi(\req)$ account for the entire codomain but not the entire
    domain, because case (3) is surjective on that partition of the codomain.
    Therefore, $\pi(\req)$ is a natural surjective mapping because there are
    infinitely many \inps in Case~\ref{case4}, so there are infinitely many
    one-to-one mappings in Cases 1-3, and then infinitely many two-to-one
    mappings from Case 4.
  \end{proofsketch}

  Now we will show that for every $\req \in \localset$,
  $\mb(\req) \leq \ma(\pi(\req))$. Again, we only consider the case where \ma
  does not make an \lru -like request at time $j+1$. We proceed by case analysis
  in~\proporeftwo{case1}{case2-4}. Since we will be comparing the cache contents
  of \ma and \mb by induction, we define the cache state of \mb and \ma as
  they serve \req and $\pi(\req)$, respectively.

  \begin{definition}[Cache state (informal, \cite{AngeSc09})]
    The \defn{cache state} of an algorithm \ma at any timestep $t$ consists of
    the set of pages in the cache as well as the tag assigned to each page. For
    a more formal definition, see~\defref{cache-state-formal}.
  \end{definition}

  We choose tags at time $j+1$ to make \ma \lru -like and tag-based on the
  suffix of \req so that we can compare \ma to \mb.

  \begin{restatable}[Case 1 of $\pi(\req)$]{prop}{caseone}
      \label{propo:case1}
      If $\curr \compsuf$ is consistent with $f$, $\mb(\req) = \ma(\pi(\req))$.
    \end{restatable}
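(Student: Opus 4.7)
\begin{proofsketch}
The plan is to couple the execution of $\mb$ on $\req$ with the execution of $\ma$ on $\pi(\req) = \reqpre r_{j+1} \compsuf$ and show that at every timestep the two executions exhibit identical hit/miss behavior, so they incur exactly the same total time.

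First I would observe that on $\reqpre$ the two executions are identical by construction of $\mb$ (which copies $\ma$'s decisions up through time $j$), and at $r_{j+1}$ the two algorithms incur exactly the same hits and misses, since the same requests are issued and the caches still agree. The only behavioral difference is that $\ma$ evicts $\pnlru$ while $\mb$ evicts $\plru$ and simultaneously demotes $\pnlru$'s tag to $\text{last}[\plru, \skedpre]$. Thus the number of timesteps used through $r_{j+1}$ is identical, and once the fetches complete, $\ma$'s cache contains $\plru$ with its original tag while $\mb$'s cache contains $\pnlru$ carrying that same tag; every other page and every other tag coincides exactly.

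Next I would prove by induction on $t > j+1$ the coupling invariant that the cache contents and tags of $\ma$ serving $\pi(\req)$ and of $\mb$ serving $\req$ agree at time $t$ after applying the name swap $\plru \leftrightarrow \pnlru$. In the inductive step, if the request served at time $t$ involves neither $\plru$ nor $\pnlru$, it is literally the same request in both sequences, and by the invariant together with the fact that both algorithms are \lru-like from time $j+2$ onward, both produce the same hit/miss outcome and, on a miss, evict pages whose tags correspond under the swap. If instead the request in $\req$ is $\plru$ or $\pnlru$, then by the definition of $\compsuf$ the corresponding request in $\pi(\req)$ is the swapped page, and the swapped-cache invariant ensures that one algorithm hits iff the other does, with tag updates that preserve the invariant. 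Either way, the number of timesteps consumed at time $t$ is the same for both executions and the invariant carries through.

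Summing across all timesteps then yields $\mb(\req) = \ma(\pi(\req))$. The main obstacle is the careful bookkeeping of tags under the swap: the custom demotion $\text{tag}_\mb[\pnlru, \skedpre \cdot s^\ma] \gets \text{last}[\plru, \skedpre]$ at time $j+1$ is precisely what makes the base case of the invariant hold, and preservation under \lru-like evictions requires checking that the smallest-tag page in $\mb$'s cache is always the swap-image of the smallest-tag page in $\ma$'s cache. A secondary point is that misses whose fetch delay is shortened by a concurrent fetch on another core enjoy the same savings on both sides, since the hit/miss pattern lines up request-for-request, so fetch-start and fetch-completion times agree across the coupling.
\end{proofsketch}
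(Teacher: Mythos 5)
Your proposal is correct and follows essentially the same route as the paper: an induction on the timestep maintaining the invariant that \mb's cache state on \req equals the complement (the $\plru \leftrightarrow \pnlru$ swap of pages and tags) of \ma's cache state on $\pi(\req)$, with the same three-way case analysis on whether the current request is \plru, \pnlru, or neither. Your observation that the tag demotion at time $j+1$ is exactly what makes the base case of the swapped-cache invariant hold matches the paper's argument.
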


    \begin{proofsketch}
      We prove the proposition by induction on the timestep $\ell$. We will show
      that the cache states of \ma and \mb are such that \mb incurs a miss at
      time $\ell$ on \req if and only if \ma incurs a miss at time $\ell$ on
      $\pi(\req)$. We proceed by case analysis.
    \begin{description}
    \item [Case 1.] If none of the requests at time $\ell$ are $\pnlru, \plru$,
      then $\ma, \mb$ have the same behavior and incur the same cost at time
      $\ell$. Therefore, the proposition holds for $\ell + 1$.
    \item[Case 2.] If $\mb$ sees a request to \plru at time $\ell$, then $\ma$
      sees a request to $\pnlru$. By the induction hypothesis, they have the
      same behavior with their respective \pnlru, \plru, and update their
      cache states to assign the same tag to their respective pages.
    \item[Case 3.]  If $\mb$ sees a request to \pnlru at time $\ell$, then $\ma$
      sees a request to $\plru$, and we use a symmetric argument to Case 2.
    \end{description}
  \end{proofsketch}

  \begin{restatable}[Cases 2, 3, 4 of $\pi(\req)$]{prop}{caseothers}
    \label{propo:case2-4}
    If \fullcompsuf is not consistent with $f$, then
    $\mb(\req) \leq \ma(\pi(\req))$.
  \end{restatable}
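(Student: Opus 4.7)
The plan is to apply Lemma~\lemref{local-order} with $\beta=\pnlru$ and $\delta=\plru$ to pin down the structure of \reqsuf, and then to verify $\mb(\req)\le\ma(\pi(\req))$ in each of the three sub-cases defining $\pi$. The lemma applies because \pnlru is in \ma's cache at \ts $j+1$ (hence requested in \skedpre) and because \plru is the LRU page in \ma's cache at that \ts (hence no \plru appears after the last \pnlru in \skedpre). Combined with the assumption that \fullcompsuf is not consistent with $f$, the lemma yields that \reqsuf contains a request to \pnlru and that no \plru in \skedsuf precedes the first \pnlru. In particular, the earliest \ts at which the caches of \ma and \mb can force different behavior is the first \pnlru in \reqsuf, where \ma misses and \mb hits.

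For Case 2 ($\pi(\req)=\req$), \mb experiences a miss on some page $y$ before the first \plru in \reqsuf; by the ordering above $y$ is neither \plru nor \pnlru, so $y$ is also a miss for \ma. I would argue inductively that \ma and \mb serve identical hits up to this miss, and that at the miss itself \mb evicts \pnlru (its LRU by construction of the \mb tags) while \ma evicts \plru (its LRU). After that \ts the two caches coincide, so the remaining executions are identical and $\mb(\req)=\ma(\req)=\ma(\pi(\req))$.

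For Case 3 ($\pi(\req)=\invpage$), \mb has no miss before the first \plru in \reqsuf, and \req contains the long \pnlru block required by \defref{inv-seq}. I would prove that the schedule of \ma on \invpage equals the schedule of \mb on \req \ts by \ts. The shortening of each per-core \pnlru block from $\tau+a_i$ down to $a_i+1$ requests is calibrated so that \ma's $\tau$ fetch steps absorb the tail of the block, with the staggering parameters $b_j$ handling cores whose \pnlru block begins mid-fetch. Dually, the single \plru miss that \mb pays on \req appears as a \plru miss for \ma on \invpage, because \ma evicts \plru as its LRU while resolving the earlier \pnlru miss; once these two alignments are checked, $\mb(\req)=\ma(\invpage)$. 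For Case 4, the remaining \req (not in \invpagesetnlru) are mapped to any $\req'$ on which \ma's total time strictly exceeds $\mb(\req)$; such $\req'$ exist in abundance by the \ccp, and this assignment supplies the at-most-two-to-one overflow that \proporef{surj-map} needs to make $\pi$ natural surjective.

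The main obstacle is Case 3, specifically the \ts-by-\ts schedule match under multi-core interleaving. The parameters $a_i$ and $b_j$ in \defref{inv-seq} differ across cores because one core's \pnlru block may begin while another is still fetching, so any misaligned shortening could propagate through the shared cache and shift the subsequent \plru miss between \ma on \invpage and \mb on \req. I expect this will require a careful case split on which cores are mid-fetch at each block boundary, and carrying out that case split rigorously is the bulk of the work.
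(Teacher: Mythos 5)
Your proposal is correct and follows essentially the same route as the paper's proof: apply \lemref{local-order} with $\beta=\pnlru$, $\delta=\plru$ to force \pnlru to precede \plru in the suffix, then handle Case 2 by showing both algorithms miss on a third page and evict their respective least-recently-used pages (\pnlru for \mb, \plru for \ma) so the caches coincide thereafter, Case 3 via the timestep-by-timestep schedule match guaranteed by the inverse construction, and Case 4 by the construction of $\req'$. Your identification of the first request to \pnlru as the true divergence point, and your eviction assignment in Case 2, are in fact the internally consistent reading of the paper's argument.
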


  \begin{proofsketch}
    We proceed by case analysis on $\pi(\req)$. By construction, \ma and \mb
    incur the same cost up until time $j+1$. Their cache states differ only in
    that \ma's cache contains $\plru$ and \mb's cache contains $\pnlru$.
    Since \fullcompsuf is not consistent with $f$,~\lemref{local-order}
    states that both \plru and \pnlru must appear in the suffix \reqsuf and that
    \pnlru must be requested earlier (in time) than \plru in \reqsuf.
    \begin{description}
    \item[Case 2 of $\pi(\req)$.] If \mb incurs a miss before the first request
      to \plru in \reqsuf, $\pi(\req) = \req$. Both \ma and \mb incur a miss at
      time $\ell$, and replace $\pnlru$ and $\plru$, respectively. Therefore,
      \ma and \mb also have all the same eviction decisions after time $\ell$
      because they have matching cache states, so $\mb(\req) = \ma(\pi(\req))$.
    \item[Case 3 of $\pi(\req)$.] Suppose that the first request to \plru in
      \reqsuf occurs at time $t$.  If $\pi(\req) = \invnlru$, \ma and \mb do not
      incur any misses between times $j+1$ and $t$. At time $t$, \mb incurs a
      hit and \ma incurs a miss. By definition of inverse, \ma and \mb so
      $\mb(\req) = \ma(\pi(\req))$ because they have the same total time
      (repeated requests in \mb to match the miss in \ma).
    \item [Case 4 of $\pi(\req)$.]  If $\pi(\req) = \req'$,
      $\mb(\req) < \ma(\pi(\req))$ by construction of $\req'$.
      \end{description}
    \end{proofsketch}

  We have shown in~\proporeftwo{case1}{case2-4} that there exists a surjection
  $\pi$ such that for all $\req \in \localset$,
  $\mb(\req) \leq \ma(\pi(\req))$.
  \end{proof}

    \lrusteps*

  \begin{proof}
  We proceed by induction on $t$. The lemma is trivially true for $t = 0$. Let
  $\mb_t \in B_t$ be an algorithm such that $\mb_t \sleq \ma$, and for any
  \inp $\req \in \localset$, $\mb_t$ makes the same decisions as \ma
  for the first $\nar - t$ timesteps while serving \req.

  We show that the claim holds for $t+1$ as well. From~\lemref{lru-surjection},
  there exists an algorithm \mb such that $\mb \sleq \mb_t$, and for every
  $\req \in \localset$, \mb makes an \lru -like decision at time $\nar - t$, and
  matches $\mb_t$ on the first $\nar - t - 1$ requests in $\req$.

  Note that \mb does not necessarily make \lru -like decisions for requests
  after $\nar - t + 1$. By the induction hypothesis, there exists an algorithm
  $\mb'_t \in B_t$ such that i) $\mb'_t \sleq \mb$, and ii) for every
  $\req \in \localset$, $\mb'_t$ makes the same decisions as $\mb$ on the first
  $\nar - t$ timesteps of $\req$, and \lru -like decisions on the remaining
  timesteps. By definition, $\mb'_t \in B_{t+1}$. We can reapply the induction
  hypothesis: $\mb'_t$ makes the same decisions as \ma in the first
  $\nar - t - 1$ timesteps of \req, and so the lemma holds for $t+1$.
  \end{proof}

\clearpage
\section{Proofs for Propositions in~\secref{surjective}}\label{app:surj-propos}

\surjmap*

 \begin{proof}
    ~\lemref{lru-surjection} is trivially true if \ma made only \lru -like
    requests at time $j+1$ because \ma and \mb would be the same. Therefore, we
    will consider the case where \ma makes a non-\lru -like eviction at time
    $j+1$.

    We proceed by cases following the definition of $\pi(\req)$.
    \begin{description}
    \item[Case~\ref{case1}.] \ma also does not make an \lru -like eviction at
      time $j+1$ on both \req and $\pi(\req)$. Since the complement of \compsuf
      is just \reqsuf, $\pi(\pi(\req)) = \req$.
    \item[Case~\ref{case2}.] \fullcompsuf is not consistent with $f$ and \mb
      incurs a miss before the first request to \plru in \reqsuf.  Trivially,
      $\pi(\pi(\req)) = \req$ because $\pi(\req) = \req$.
    \item[Case~\ref{case3}.]  If \fullcompsuf is not consistent with $f$, \mb
      does not incur a miss before the first request to \plru in \reqsuf, and
      $\req \in \invpagesetnlru$, then $\pi(\req) = \invnlru$. The set of all
      inverses from $\req \in \invpagesetnlru$ is \emph{all sequences} in \localset
      where \fullcompsuf is not consistent with $f$. From~\defref{inv-seq},
      \invpagesetnlru is the set of all sequences with at least one request to
      $\sigma$ at time $t$.
    \item[Case~\ref{case4}.] If \fullcompsuf is not consistent with $f$, \mb
      does not incur a miss before the first request to \plru in \reqsuf, and
      $\req \notin \invpagesetnlru$, then $\pi(\req) = \req'$.  Cases 1, 2, and 3
      actually map to all of \localset, but we require Case 4 because we have
      not yet accounted for all of the domain. Since we have already defined a
      mapping to all of the codomain in the first three cases, all we need is a
      corresponding input $\req'$ such that $\mb(\req) \leq \ma(\req')$.
    \end{description}

    Therefore, $\pi(\req)$ is a natural surjective mapping because there are
    infinitely many \inps in Case~\ref{case4}, so there are infinitely many
    one-to-one mappings in Cases 1-3, and then infinitely many two-to-one
    mappings from Case 4.
  \end{proof}

  \begin{definition}[Cache state (formal) \cite{AngeSc09}]
    \label{def:cache-state-formal}
    Let $C[\ma, \req]$ be the \defn{cache state} of
    algorithm \ma after it has served \inp $\req$. The cache state consists of
    the set $P[\ma, \req]$ of pages in the cache after serving $\req$, as well
    as assigned tags $\text{tag}_\ma[\sigma, \req]$ equal to
    $\text{last}_\ma[\sigma, \req]$ for all $\sigma \in P[\ma, \req]$.

    For example $C[\ma, \curr]$ is the cache state of \ma after it has served
    requests up to time $j+1$.

    The \defn{complement} of cache state $C[\ma, \req]$ with respect to $\beta$
    and $\delta$, denoted by $\overline{C}[\ma, \req]$ is a cache state in
    which:
    \begin{itemize}
      \item the set of pages is the set $\overline{P[\ma, \req]}$ (where
        $\alpha$ is replaced with $\beta$ and vice versa).
      \item tags are as in $C[\ma, \req]$ except for: if
        $\beta \in \overline{P[\ma, \req]}$ (resp. if
        $\delta \in \overline{P[\ma, \req]}$), then $\beta$'s tag in
        $\overline{C}[\ma, \req]$ is the tag of $\delta$ in $C[\ma, \req]$
        (resp. the tag of $\beta$ in $C[\ma, \req]$).
      \end{itemize}
    \end{definition}

\caseone*

\begin{proof}
  Let $\req^{\leq \ell, \ma}$ be the requests served by $\ma$ up to and
  including time $\ell$.  Let $ms(\ma, \req)$ be the makespan of $\ma$ on
  $\req$. We will show that for all $j+1 \leq \ell \leq ms(\ma, \req)$,
  algorithm \mb satisfies the following properties:
      \begin{enumerate}
      \item
        $C[\mb, \req^{\leq \ell, \ma}] = \overline{C}[\ma, \pi(\req)^{\leq \ell,
          \ma}]$, and
        \item \mb incurs a miss at time $\ell$ on \req if and only if \ma incurs
        a miss at time $\ell$ on $\pi(\req)$.
      \end{enumerate}

      We prove the proposition by induction on the timestep $\ell$. Suppose that
      the claim holds for $\ell < n$: we will show that it holds for $\ell +
      1$. By construction, the claim holds for $\ell = j + 1$; note that the
      actions of \ma on $\pi(\req)$ at time $j+1$ and choice of initial tags
      guarantee that
      $C[\mb, \req^{\leq j+1, \ma}] = \overline{C}[\ma, \pi(\req)^{\leq j+1,
        \ma}]$. We now use case analysis at timestep $\ell + 1$ on requests
      $\skedbri[\ell+1]$ for $i = 1, \ldots, p$ where $\skedbri[\ell+1]$ is the
      request by $p_i$ at time $\ell + 1$ while \mb serves \req. Similarly,
      $\skedmapari[\ell+1]$ for $i = 1, \ldots, p$ is the request by $p_i$ at
      time $\ell + 1$ while \ma serves $\pi(\req)$.
    \begin{description}
    \item [Case 1.] If $\skedbri[\ell+1] \neq \pnlru, \plru$, then \\
      $\pnlru, \plru \neq \skedmapari[\ell+1]$. If a request $\skedbri[\ell+1]$
      is a hit for \mb, it is also a hit for \ma, and both \ma and \mb will
      update the tag of page $\skedbri[\ell+1]$ to $\ell + 1$ in their
      corresponding caches. Similarly, if $\skedbri[\ell+1]$ is a miss for \mb,
      then by the induction hypothesis about the cache configuration of \ma,
      $\overline{\skedbri[\ell+1]}$ will also be a miss in \ma. Additionally,
      \ma and \mb will evict the same page from their cache and update the tag
      of $\skedbri[\ell+1]$ to $\ell + 1$, so the proposition holds for
      $\ell + 1$.
    \item[Case 2.] If $\skedbri[\ell+1] = \plru$, then
      $\skedmapari[\ell + 1] = \pnlru$. We consider two cases: either
      $\skedbri[\ell+1]$ is a hit or miss for \mb. If it was a hit, then by the
      induction hypothesis $\pnlru \in C[\ma, \pi(\req)^{\leq \ell, \ma}]$ and
      $\skedmapari[\ell + 1]$ is a hit in \ma. After serving request
      $\skedbri[\ell+1]$, \mb updates the tag of \plru to $\ell + 1$, and \ma
      sets the tag of \pnlru to $\ell + 1$, so
      $C[\mb, \req^{\leq \ell + 1, \mb}] = \overline{C}[\ma, \pi(\req)^{\leq
        \ell + 1, \ma}]$.  If $\skedbr[\ell+1]_i$ was a miss for \mb, then from
      the induction hypothesis $\skedmapari[\ell + 1]$ was not in \ma's cache at
      time $\ell$. Therefore, \ma and \mb evict the same page in order to bring
      in \plru and \pnlru, respectively, and update the respective tags to
      $\ell + 1$. Therefore, we maintain the invariant that
      $C[\mb, \req^{\leq \ell + 1, \mb}] = \overline{C}[\ma, \pi(\req)^{\leq
        \ell + 1, \ma}]$.
    \item[Case 3.]  If $\skedbri[\ell+1] = \pnlru$, then
      $\skedmapari[\ell + 1] = \plru$. We use a symmetric argument to Case 2.
    \end{description}
  \end{proof}

  \caseothers*

  \begin{proof}
    We proceed by case analysis on $\pi(\req)$. From construction of \mb,
    $\mb(\req^{\leq j+1, \mb}) = \ma(\pi(\req^{\leq j+1, \mb}))$. Additionally,
    from initial choice of tags,
    $C[\mb,\req^{\leq j+1, \mb}] = \overline{C}[\ma, \pi(\req^{\leq j+1,
      \ma})]$. Specifically, \newline
    $C[\mb,\req^{\leq j+1, \mb}], C[\ma,\req^{\leq j+1, \mb}]$ have identical
    page sets, except that the first contains \pnlru and the second contains
    \plru. Since \fullcompsuf is not consistent with $f$, ~\lemref{local-order}
    states that both \plru and \pnlru must appear in the suffix \reqsuf and that
    \pnlru must be requested earlier (in time) than \plru in \reqsuf.
    \begin{description}
    \item[Case 2 of $\pi(\req)$.] If \fullcompsuf is not consistent with $f$ and
      \mb incurs a miss before the first request to \plru in \reqsuf,
      $\pi(\req) = \req$. Suppose that the first request to \plru in \reqsuf
      occurs at timestep $t$ and let $\ell$ ($j + 1 < \ell < t$) be the earliest
      timestep on which \mb incurs a miss before $t$. Let $\sigma^i_\ell$ be
      the page that caused the miss at time $\ell$ requested by $p_i$:
      $\sigma^i_{\ell}$ cannot be \plru.  Every request up to time $\ell$ must
      have been a hit for \mb, and
      $C[\mb,\req^{< \ell, \mb}] = \overline{C}[\ma, \pi(\req)^{< \ell, \ma}]$.
      On request $\sigma^i_\ell$, \mb incurs a miss, evicts \plru (in an \lru
      -like decision), and brings $\sigma^i_{\ell}$ to the cache, and sets its
      tag to $\ell$. Since $\sigma^i_{\ell} \notin \{\plru, \pnlru\}$, \ma will
      also incur a miss in $\pi(\req)$ at time $\ell$ on $\sigma^i_{\ell}$ and
      replace \pnlru with $\sigma^i_{\ell}$ in a tag-based eviction (and also
      set the tag of $\sigma^i_{\ell}$ to $\ell$). Therefore, \ma and \mb have
      all the same eviction decisions after time $\ell$ because
      $C[\mb,\req^{\leq \ell, \mb}] = C[\ma, \pi(\req)^{\leq \ell, \ma}]$, and
      $\mb(\req) = \ma(\pi(\req))$.
    \item[Case 3 of $\pi(\req)$.] Suppose that the first request to \plru in
      \reqsuf occurs at time $t$.  If \fullcompsuf is not consistent with $f$,
      \mb does not incur a miss between times $j+1$ and $t$, and
      $\req \in \invpagesetnlru$, $\pi(\req) = \invnlru$. In this case, \ma also
      does not incur any misses between times $j+1$ and $t$. On request
      $\sigma_i^t = \pnlru$, \mb hits on \pnlru and \ma incurs a miss and makes
      an \lru -like eviction: specifically, it evicts \plru, replaces it with
      \pnlru, and updates its tag to $t + \tau$ (after it is done fetching). At
      time $t + \tau$, the cache states of \ma and \mb are the same
      ($C[\mb, \req^{\leq t+\tau,\mb}] = C[\ma, \req^{\leq t + \tau,
        \ma}]$). Additionally, \mb, \ma are tag-based on each request in
      $\req^{> t + \tau, \mb}, \pi(\req)^{> t + \tau, \ma}$ (which happen to be
      the same). Therefore, the actions of \ma and \mb are the same after time
      $t + \tau$, and so $\mb(\req) = \ma(\pi(\req))$ because they have the same
      total time (repeated requests in \mb to match the miss in \ma).
      \item [Case 4 of $\pi(\req)$.]  If \fullcompsuf is not consistent with
        $f$, \mb does not incur a miss before the first request to \plru in
        \reqsuf, and $\req \notin \invpagesetnlru$, then $\pi(\req) = \req'$. In
        this case, $\mb(\req) < \ma(\pi(\req))$ by construction of $\req'$.
      \end{description}
    \end{proof}


\end{document}